\documentclass[preprint,12pt]{elsarticle}
\usepackage{amsmath,amsthm}
\usepackage{latexsym}
\usepackage{amssymb}
\usepackage[table]{xcolor}
\usepackage{graphicx}
\usepackage{verbatim}
\usepackage{xspace}
\usepackage{subfigure} 
\usepackage{algorithm}
\usepackage[noend]{algorithmic}
\usepackage[tableposition=top]{caption}
\usepackage{tikz}
\usepackage{url}
\usetikzlibrary{positioning,chains,fit,shapes,calc,arrows}

\newcommand{\pspace}{\textsc{PSPACE}\xspace}
\newcommand{\np}{\textsc{NP}\xspace}
\newcommand{\IS}{\textsc{Online Independent Set}\xspace}
\newcommand{\MOS}{\textsc{Maximum Online Set}\xspace}
\newcommand{\ISoff}{\textsc{Independent Set}\xspace}
\newcommand{\VC}{\textsc{Online Vertex Cover}\xspace}
\newcommand{\DS}{\textsc{Online Dominating Set}\xspace}
\newcommand{\M}{\textsc{Online Matching}\xspace}
\newcommand{\IM}{\textsc{Online Matroid Intersection}\xspace}
\newcommand{\F}{\textsc{Online Maximal Forest}\xspace}

\def\GI{\ensuremath{\texttt{GIS}}\xspace}
\def\GMOS{\ensuremath{\texttt{GMOS}}\xspace}

\def\AGI{\ensuremath{\texttt{Almost-GIS}}\xspace}
\def\GV{\ensuremath{\texttt{GVC}}\xspace}
\def\GD{\ensuremath{\texttt{GDS}}\xspace}
\def\GM{\ensuremath{\texttt{GM}}\xspace}

\def\GF{\ensuremath{\texttt{GF}}\xspace}
\def\optalg{\ensuremath{\texttt{IS-STAR}}\xspace}

\def\alg{\ensuremath{\texttt{ALG}}\xspace}
\def\adv{\ensuremath{\texttt{ADV}}\xspace}
\def\opt{\ensuremath{\texttt{OPT}}\xspace}
\def\opton{\ensuremath{\texttt{OPTON}}\xspace}
\def\algp{\ensuremath{\overline{\texttt{ALG}}}\xspace}
\def\optalgp{\ensuremath{\overline{\texttt{IS-STAR}}}\xspace}

\def\ois{\ensuremath{I^O}\xspace}
\def\ovc{\ensuremath{V^O}\xspace}
\def\ods{\ensuremath{D^O}\xspace}
\def\om{\ensuremath{M^O}\xspace}

\def\oms{\ensuremath{\mathit{MS}^O}\xspace}
\newcommand{\lille}{\textsc{$s(G')$}\xspace}
\newcommand{\family}{\textsc{$\tilde{G}$}\xspace}
\newcommand{\stor}{\textsc{$b(G')$}\xspace}

\usepackage{array}
\pagestyle{plain}
\newtheorem{theorem}{Theorem}[section]
\newtheorem{lemma}[theorem]{Lemma}

\newtheorem{corollary}[theorem]{Corollary}
\newtheorem{definition}[theorem]{Definition}

\journal{Discrete Applied Mathematics}

\begin{document}
\newtheorem{assumption}{Assumption}
\newtheorem{observation}[theorem]{Observation}
\begin{frontmatter}
\title{Adding Isolated Vertices Makes some Greedy Online Algorithms Optimal\tnoteref{t1}}
\author[sdu]{Joan Boyar\corref{c1}}
\ead{joan@imada.sdu.dk}
\author[sdu]{Christian Kudahl}
\ead{kudahl@gmail.com}
\tnotetext[t1]{A preliminary version of this paper appeared in the proceedings of the
26th International Workshop on Combinatorial Algorithms (IWOCA 2015),
{\em Lecture Notes in Computer Science}, 9538: 65-76, Springer 2016.}
\cortext[c1]{Corresponding author}
\address[sdu]{Department of Mathematics and Computer Science,
University of Southern Denmark, Campusvej 55, 5270 Odense M, Denmark}

\begin{abstract}
An unexpected difference between online and offline algorithms
is observed.
The natural greedy algorithms are shown to be worst case online optimal 
for \IS and \VC on graphs with ``enough'' isolated vertices, Freckle
Graphs.
For \DS, the greedy algorithm is shown to be worst case online optimal
on graphs with at least one isolated vertex. 
These algorithms are not online optimal in general. 
The online optimality results for these greedy algorithms imply optimality
according to various worst case performance measures, such as the
competitive ratio. It is also shown that, despite this worst case
optimality, there are Freckle graphs where the greedy independent set 
algorithm is objectively less good than another algorithm.

It is shown that it is \np-hard to determine any of the following
for a given graph:
the online independence number, the online vertex cover number, and the 
online domination number.
\end{abstract}

\begin{keyword}
online algorithms \sep greedy algorithm \sep isolated vertices \sep online independence number
\end{keyword}
\end{frontmatter}
\section{Introduction}
This paper contributes to the larger goal of better understanding the nature of online optimality,
greedy algorithms, and different performance measures for online algorithms.
The graph problems \IS, \VC and \DS, which are defined below, are
considered in the \emph{vertex-arrival} model, where
the vertices of a graph, $G$, are revealed one by one.
When a vertex is revealed (we also say that it is ``requested''), its edges to previously revealed vertices
are revealed.
At this point, an algorithm irrevocably either accepts the vertex or rejects it.
This model is well-studied (see for example,
\cite{LST89,GL90,Vishwanathan1992657,GKL97,conjecture,Halldorsson99onlinecoloring,Halldorsson2002953}).

We show that, for some graphs, an obvious greedy algorithm for each
of these 
problems performs less well than another online algorithm and thus
is not online optimal. However, this greedy algorithm performs (at least in
some sense) at least
as well as any other online algorithm for these problems, as long
as the graph has enough isolated vertices. Thus, in contrast to
the case with offline algorithms, adding isolated vertices to
a graph can improve an algorithm's performance, even making it
``optimal''.

For an online algorithm for these problems and a particular
sequence of requests, let $S$ denote the set of 
accepted vertices, which we call a \emph{solution}.
When all vertices have been revealed (requested and either accepted
or rejected by the algorithm), $S$ must fulfill certain conditions:

\begin{itemize}
\item In the \IS problem \cite{Halldorsson2002953,DemangePP00}, $S$ must form an independent set. That is, no two
vertices in $S$ may have an edge between them. The goal is to maximize $|S|$.
\item In the \VC problem \cite{Demange200583}, $S$ must form a vertex cover. That is, each edge
in $G$ must have at least one endpoint in $S$. The goal is to minimize $|S|$.
\item In the \DS problem \cite{dominatingset}, $S$ must form a dominating set. That is, each vertex
in $G$ must be in $S$ or have a neighbor in $S$. The goal is to minimize $|S|$.
\end{itemize}

If a solution does not live up to the specified requirement, it is said to be infeasible.
The score of a feasible solution is $|S|$. The score of an infeasible solution is $\infty$
for minimization problems and $-\infty$ for maximization problems. Note that for \DS, it is not required that $S$ form a dominating set at all times. It just needs to be a dominating set
when the whole graph has been revealed.
If, for example, it is known that the graph is connected, the algorithm might reject the first vertex since it is known that it will be possible to dominate this vertex later.

In Section~\ref{prelims}, we define the greedy algorithms for
the above problems, along with concepts analogous to the
online chromatic number of Gy\'{a}rf\'{a}s et al.~\cite{GKL99}
for the above problems, giving a natural definition of optimality for
online algorithms. In Section~\ref{nonopt}, we show that
greedy algorithms are not in general online optimal for these problems.
In Section~\ref{freckle}, we define Freckle Graphs, which are
graphs which have ``enough'' isolated vertices to make the 
greedy algorithms online optimal. In proving that the greedy algorithms
are optimal on Freckle Graphs, we also show that, for \IS, one can,
without loss of generality, only consider adversaries which never request
a vertex adjacent to an already accepted vertex, while there are alternatives.
In Section~\ref{adding}, we investigate what other online problems
have the property that adding isolated requests make greedy algorithms optimal. 
In Section~\ref{measures}, it is shown that
the online optimality results for these greedy algorithms imply optimality
according to various worst case performance measures, such as the
competitive ratio. In Section~\ref{bijective}, it is shown that, despite this worst case
optimality, there is a family of Freckle graphs where the greedy 
independent set 
algorithm is objectively less good than another algorithm.
Various NP-hardness
results concerning optimality are proven in Section~\ref{hardness}.
There are some concluding remarks and open questions in the last section.
Note that Theorem~\ref{ishard} and Theorem~\ref{ispspace} appeared in the second author's Master's
thesis~\cite{kudahl}, which served as inspiration for this paper.

\section{Algorithms and Preliminaries}
\label{prelims}
For each of the three problems, we define a greedy algorithm.
\begin{itemize}
\item In \IS, \GI accepts a revealed vertex, $v$, iff no neighbors of $v$ have been accepted.
\item In \VC, \GV accepts a revealed vertex, $v$, iff a neighbor of $v$ has previously been revealed 
but  not accepted.
\item In \DS, \GD accepts a revealed vertex, $v$, iff no neighbors of $v$ have been accepted.
\end{itemize}

Note that the algorithms \GI and \GD are the same (they have different names to emphasize that they solve different problems).
For an algorithm \alg, we define \algp to be the algorithm that simulates \alg and accepts exactly those vertices
that \alg rejects. This defines a bijection between \IS and \VC algorithms.
Note that $\GV = \overline{\GI}$.

For a graph, $G$, an ordering of the vertices, $\phi$, and an algorithm, \alg, we let $\alg(\phi(G))$ denote the
score of \alg on $G$ when the vertices are requested in the order $\phi$. We let $|G|$ denote the number of vertices in $G$.

For minimization problems, we define:
\[
	\alg(G)= \max_\phi \alg(\phi(G))
\]
That is, $\alg(G)$ is the highest score $\alg$ can achieve over all orderings of the vertices in $G$.

For maximization problems, we define:
\[
	\alg(G)= \min_\phi \alg(\phi(G))
\]
That is, $\alg(G)$ is the lowest score $\alg$ can achieve over all orderings of the vertices in $G$.

Since we consider a worst possible ordering, we sometimes think of an adversary
as  ordering the vertices.

\begin{observation} \label{isvc}
Let \alg be an algorithm for \IS. Let a graph, $G$, with $n$ vertices be given.
Now, \algp is an \VC algorithm and $\alg(G)+\algp(G)=n$.
\end{observation}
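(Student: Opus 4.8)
The plan is to prove the statement in three moves: first establish that \algp really is a valid \VC algorithm, then show that on every fixed ordering the two scores sum to $n$, and finally push this pointwise identity through the $\min$/$\max$ that define $\alg(G)$ and $\algp(G)$.

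First I would fix an arbitrary ordering $\phi$ and let $A$ denote the set of vertices that \alg accepts on $\phi(G)$. Since \alg is an \IS algorithm, $A$ is an independent set, so no edge of $G$ has both endpoints in $A$; equivalently, every edge has at least one endpoint in $V(G)\setminus A$. By definition \algp accepts exactly the vertices that \alg rejects, so its solution is $V(G)\setminus A$, which is therefore a vertex cover. Hence \algp is a feasible (and thus legitimate) \VC algorithm, and on this ordering its score is $|V(G)\setminus A| = n - |A| = n - \alg(\phi(G))$. This yields the pointwise identity $\alg(\phi(G)) + \algp(\phi(G)) = n$ for every $\phi$.

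The only real subtlety is the last step, where the two problems take opposite extremes over orderings: \IS is a maximization problem, so $\alg(G) = \min_\phi \alg(\phi(G))$, whereas \VC is a minimization problem, so $\algp(G) = \max_\phi \algp(\phi(G))$. Substituting the pointwise identity gives $\algp(G) = \max_\phi (n - \alg(\phi(G))) = n - \min_\phi \alg(\phi(G)) = n - \alg(G)$, using that negation turns a maximum into a minimum. Rearranging yields $\alg(G) + \algp(G) = n$, as claimed.

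I do not expect any genuine obstacle here: the content is simply that complementation sends independent sets to vertex covers and that it reverses the ordering-extreme. The one point to state carefully is that an \IS algorithm is assumed to output a feasible independent set on every ordering, so that the complement is guaranteed to be a vertex cover and no $\pm\infty$ scores ever arise in the sum.
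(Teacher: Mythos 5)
Your proof is correct and takes essentially the same approach as the paper: the paper's one-line justification (``a worst ordering of $G$ for \alg is also a worst ordering for \algp'') is exactly the pointwise identity $\alg(\phi(G))+\algp(\phi(G))=n$ combined with the min/max duality that you spell out explicitly. Your added caveat about feasibility (so that no $\pm\infty$ scores arise) is a reasonable point the paper leaves implicit.
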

The equality $\alg(G)+\algp(G)=n$ holds, since a worst ordering of $G$ for \alg is also a worst ordering for $\algp$.

In considering online algorithms for coloring, \cite{GKL99} defines
the online chromatic number, which intuitively is the best result
(minimum number of colors) any online algorithm can be guaranteed
to obtain for a particular graph (even when the graph, but not the ordering, is known in advance). We define analogous concepts
for the problems we consider, defining for every graph a number representing the best value any online algorithm can achieve.
Note that in considering all algorithms, we include those which know 
the graph in advance.
Of course, when the graph is known, the order in which the vertices are requested is not known to an online algorithm, and
the  label given with a requested vertex does not necessarily correspond to its
label in the known graph:
The subgraph revealed up to this point might be isomorphic
to more than one subgraph of the known graph and it could correspond
to any of these subgraphs. 

Let $\ois(G)$ denote the \emph{online independence number} of $G$. This is the largest number such that
there exists an algorithm, \alg, for \IS with $\alg(G)=\ois(G)$.
Similarly,  let $\ovc(G)$, the \emph{online vertex cover number}, be the smallest number such that there exists an algorithm, \alg, for \VC with $\alg(G)=\ovc(G)$.
Also let $\ods(G)$, the \emph{online domination number}, be the smallest number such that there exists an algorithm, \alg, for \DS with $\alg(G)=\ods(G)$.

The same relation between the online independence number and the
online vertex cover number holds as between the independence number and
the vertex cover number.

\begin{observation} \label{isvcsum} 
For a graph, $G$ with $n$ vertices, we have $\ois(G)+\ovc(G)=n$.
\end{observation}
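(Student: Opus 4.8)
The plan is to leverage the bijection $\alg \mapsto \algp$ between \IS and \VC algorithms together with the identity $\alg(G)+\algp(G)=n$ from Observation~\ref{isvc}. The guiding idea is that maximizing the guaranteed score of an \IS algorithm is exactly the same as minimizing the guaranteed score of the corresponding \VC algorithm, since the two scores always sum to $n$; all that remains is to transport the extremal quantities across the bijection.

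First I would fix an \IS algorithm \alg attaining the online independence number, so that $\alg(G)=\ois(G)$; such an algorithm exists by the definition of $\ois(G)$. By Observation~\ref{isvc}, \algp is a \VC algorithm with $\algp(G)=n-\alg(G)=n-\ois(G)$. Since $\ovc(G)$ is the smallest value achievable by any \VC algorithm, this witness yields $\ovc(G)\le n-\ois(G)$.

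For the reverse inequality, I would fix a \VC algorithm $B$ attaining $B(G)=\ovc(G)$ and consider $\overline{B}$. Because the bar operation is an involution, $\overline{\overline{\alg}}=\alg$, so $\overline{B}$ is an \IS algorithm, and applying Observation~\ref{isvc} to it gives $\overline{B}(G)+B(G)=n$, hence $\overline{B}(G)=n-\ovc(G)$. Since $\ois(G)$ is the largest value achievable by any \IS algorithm, $\ois(G)\ge n-\ovc(G)$, i.e. $\ovc(G)\ge n-\ois(G)$. Combining the two inequalities gives $\ovc(G)=n-\ois(G)$, which is precisely $\ois(G)+\ovc(G)=n$.

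The argument is essentially bookkeeping, so I do not anticipate a genuine obstacle; the one point requiring care is to keep straight which extremal quantity justifies which direction of the inequality—the \emph{largest} defining $\ois(G)$ for the lower bound on $\ovc(G)$, and the \emph{smallest} defining $\ovc(G)$ for the upper bound—and to confirm that attaining algorithms exist on both sides. The latter holds because $\alg(G)$ ranges over a finite set of integers in $\{0,\dots,n\}$, so both the maximum and the minimum are attained; indeed the definitions of $\ois(G)$ and $\ovc(G)$ already assert the existence of such optimal algorithms.
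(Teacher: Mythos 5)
Your proposal is correct and follows essentially the same route as the paper: complement an optimal \IS algorithm via Observation~\ref{isvc} to witness $\ovc(G)\le n-\ois(G)$, and complement in the other direction to rule out a better \VC algorithm (the paper phrases this second step as a one-line contradiction, while you state it as an explicit reverse inequality). The extra care you take with the involution property and the attainment of the extrema is implicit in the paper's definitions, so there is no substantive difference.
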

\begin{proof}
Let a graph, $G$, with $n$ vertices be given.
Let $\alg$ be an algorithm for \IS such that $\alg(G)=\ois(G)$. 
From Observation \ref{isvc}, we have that $\algp$ is an algorithm for \VC such 
that $\algp(G)=n-\ois(G)$. It must hold that $\algp(G)=\ovc(G)$, since the 
existence
of an algorithm with a lower vertex cover number
would imply the existence of a corresponding algorithm for \IS with an 
independence number greater than $\alg(G)=\ois(G)$. \qed
\end{proof}

\section{Non-optimality of Greedy Algorithms}
\label{nonopt}
We start by motivating the other results in this paper by showing 
that the greedy algorithms are not optimal in general. In particular,
they are not optimal on the star graphs, $S_n$, $n\geq 3$, which
have a center vertex, $s$, and $n$ other vertices, adjacent to
$s$, but not to each other.

The algorithm, \optalg (see Algorithm~\ref{IS-STAR}), does much better than \GI for the
independent set problem on star graphs.

\begin{algorithm}[!t] 
\begin{algorithmic}[1]
\FOR{request to vertex $v$}
\IF{$v$ is the first vertex}
\STATE reject $v$
\ELSIF{$v$ is the second vertex and it has an edge to the first}
\STATE reject $v$
\ELSIF{$v$ has more than one neighbor already}
\STATE reject $v$
\ELSE 
\STATE accept $v$
\ENDIF
\ENDFOR
\end{algorithmic}
\caption{\optalg, an online optimal algorithm for independent set for $S_n$}
\label{IS-STAR}
\end{algorithm}

\begin{theorem} \label{optvsgreedy}
For a star graph, $S_n$,   $\optalg(S_n)=n-1$ and $\GI(S_n) =1$.
\end{theorem}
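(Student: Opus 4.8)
The plan is to prove the two equalities separately, each by combining a lower bound (there exists an ordering achieving the claimed value) with an upper bound (no ordering does better/worse, depending on the problem type). Since \IS is a maximization problem, recall that $\alg(G) = \min_\phi \alg(\phi(G))$, so I must show that the claimed value is exactly the \emph{worst-case} score over all orderings.

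\textbf{Computing $\GI(S_n) = 1$.} First I would show the lower bound $\GI(S_n) \ge 1$: for any ordering $\phi$, \GI accepts the very first requested vertex (it has no previously accepted neighbors), so \GI always accepts at least one vertex and never produces an infeasible solution, whence $\GI(\phi(S_n)) \ge 1$. For the matching upper bound $\GI(S_n) \le 1$, I exhibit a single bad ordering: request the center $s$ first. Then \GI accepts $s$, and every subsequent leaf vertex is adjacent to the accepted $s$, so \GI rejects all of them. This gives $\GI(\phi(S_n)) = 1$ for that $\phi$, and since $\GI(S_n)$ is the minimum over orderings, $\GI(S_n) \le 1$. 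Combining, $\GI(S_n) = 1$.

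\textbf{Computing $\optalg(S_n) = n-1$.} For the lower bound $\optalg(S_n) \ge n-1$, I argue that \emph{every} ordering $\phi$ yields a score of at least $n-1$. The key structural fact about $S_n$ is that any vertex has degree either $n$ (the center) or $1$ (a leaf), and the leaves form an independent set. I would trace \optalg on an arbitrary ordering: the first requested vertex is rejected; the center $s$, whenever it appears, has more than one neighbor already revealed (for $n \ge 3$, once at least two vertices precede it) or is among the first two and rejected — in all cases \optalg rejects $s$ by one of its guard conditions. Every other vertex is a leaf with exactly one neighbor ($s$), so once it is not the first-or-second-edge-to-first special case, it is accepted. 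The careful bookkeeping is to verify that, across all orderings, exactly one vertex (never a leaf that matters, or the center) gets sacrificed, leaving at least $n-1$ accepted leaves forming an independent set; I would do a short case analysis on whether $s$ is requested first, second, or later. For the upper bound $\optalg(S_n) \le n-1$, since $|S_n| = n+1$ and $s$ is adjacent to all leaves, any independent set avoids including $s$ together with any leaf, and the algorithm's first-vertex rejection guarantees it never accepts all $n$ leaves in the ordering where $s$ comes last; more simply, the total number of leaves is $n$ but \optalg rejects the first-requested vertex, so in a worst ordering one acceptance is lost, giving at most $n-1$.

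\textbf{The main obstacle} will be the lower-bound case analysis for \optalg, namely verifying that \emph{no} adversarial ordering can drive the score below $n-1$. The delicate point is the interaction between the ``first vertex'' and ``second vertex adjacent to first'' rejection rules with the position of the center $s$: I must confirm that the algorithm never wastefully rejects two distinct leaves, and that when $s$ is requested early it is correctly filtered out by the ``more than one neighbor'' or ``second vertex'' clauses without costing an extra leaf. Handling the boundary orderings (e.g., $s$ requested first, or a leaf requested first followed by $s$) cleanly, and confirming the feasibility (independence) of the accepted set in each case, is where the real work lies; the hypothesis $n \ge 3$ is what makes the center's degree exceed the threshold in the relevant subcases.
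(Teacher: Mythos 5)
Your proposal is correct and follows essentially the same approach as the paper: a case analysis on the position of the center $s$ (first, second, or later) showing that \optalg always rejects $s$ plus exactly one leaf and accepts the remaining $n-1$ leaves on every ordering, together with the observation that \GI scores exactly $1$ when $s$ is requested first. The lower-bound/upper-bound framing and the deferred "bookkeeping" you describe amount to precisely the paper's argument, so there is no substantive difference.
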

\begin{proof}
We first show that \optalg never  accepts the center vertex, $s$.
If $s$ is presented first, it will be rejected.
If it is presented second, it will have an edge to the first vertex and be rejected.
If it is presented later, it will have more than one neighbor and be rejected.
Since $\optalg$ never accepts $s$, it produces an independent set.
For every ordering of the vertices, $\optalg$ will reject the
first vertex. If the first vertex is $s$, it will reject the
second vertex. Otherwise, it will reject $s$ when it comes. Thus,
$\optalg(S_n)=n-1$.
On the other hand, $\GI(G)=1$, since it will accept $s$ if it is 
requested first. \qed
\end{proof}

Since $n-1 >1$ for $n\geq 3$, we can conclude that \GI is not
an optimal online algorithm for all graph classes.
\begin{corollary} \label{isnotopt}
For \IS, there exists an infinite family of graphs, $S_n$ for $n\geq 3$, 
and an online algorithm, \optalg, such that $\GI(S_n)$ $<$ $\optalg(S_n)$.
\end{corollary}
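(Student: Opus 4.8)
The plan is to derive this corollary as an immediate consequence of Theorem~\ref{optvsgreedy}, so the main content is already established and what remains is to package the comparison and the ``infinite family'' claim correctly. First I would observe that the star graphs $S_n$ for $n \geq 3$ indeed form an infinite family of pairwise non-isomorphic graphs, since $S_n$ has $n+1$ vertices and thus $S_n \not\cong S_m$ whenever $n \neq m$. This disposes of the quantitative ``infinite family'' requirement before any analysis of the algorithms themselves.

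Next I would invoke Theorem~\ref{optvsgreedy} directly: it gives $\optalg(S_n) = n-1$ and $\GI(S_n) = 1$ for every star graph $S_n$. The only remaining step is the arithmetic comparison. For $n \geq 3$ we have $n - 1 \geq 2 > 1$, so $\GI(S_n) = 1 < n - 1 = \optalg(S_n)$, which is exactly the strict inequality $\GI(S_n) < \optalg(S_n)$ demanded by the statement. Since \IS is a maximization problem and $\alg(G)$ is defined as the worst-case (minimum over orderings) score, a strictly larger value for \optalg means it is genuinely better in this worst-case sense, so this inequality correctly witnesses non-optimality of \GI.

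I would then conclude that \GI cannot be an online optimal algorithm for all graph classes, because online optimality would require $\GI(G) = \ois(G)$ for every $G$, whereas here the mere existence of \optalg with a strictly higher score on $S_n$ forces $\ois(S_n) \geq \optalg(S_n) > \GI(S_n)$. Thus no single ordering-independent greedy rule attains the online independence number on these instances.

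Honestly, there is no real obstacle here: the corollary is a restatement harvesting the two explicit values from Theorem~\ref{optvsgreedy} plus the trivial bound $n - 1 > 1$ for $n \geq 3$. If I had to name the one point deserving care, it would be stating the direction of the inequality consistently with the maximization convention, so that ``$\GI(S_n) < \optalg(S_n)$'' is correctly read as \optalg outperforming \GI on the worst-case ordering rather than the reverse.
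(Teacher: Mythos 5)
Your proposal is correct and follows exactly the paper's route: the corollary is harvested directly from Theorem~\ref{optvsgreedy} together with the observation that $n-1 > 1$ for $n \geq 3$. The extra remarks you add (non-isomorphism of the $S_n$ and the reading of the inequality under the maximization convention) are sound but not needed beyond what the paper itself states.
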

Note that if some algorithm, \alg, rejects the first vertex requested,
$\alg(S_n) \leq n-1$, and if it accepts the first vertex, $\alg(S_n)=1$.
Thus \optalg is optimal.

To show that \GV is not an optimal algorithm for \VC, we consider 
\optalgp.
\begin{corollary} \label{vcnotopt}
For \VC, there exists an infinite family of graphs, $S_n$ for $n\geq 3$, 
and an online algorithm, \optalgp, such that $\optalgp(S_n)$ $<$ $\GV(S_n)$.
\end{corollary}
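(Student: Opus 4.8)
The plan is to obtain both $\optalgp(S_n)$ and $\GV(S_n)$ by complementation from the independent-set values already computed in Theorem~\ref{optvsgreedy}, rather than re-analyzing the two \VC algorithms directly on the star. This is the natural route because $\optalgp$ is, by definition, the \VC algorithm that accepts exactly the vertices $\optalg$ rejects, and we are given that $\GV = \overline{\GI}$; thus both \VC algorithms appearing in the statement are complements of \IS algorithms whose behavior on $S_n$ we already understand completely.

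The two key steps both invoke Observation~\ref{isvc}, using that the star $S_n$ has $|S_n| = n+1$ vertices (a center together with $n$ leaves). Applying the observation to $\optalg$ gives $\optalg(S_n) + \optalgp(S_n) = n+1$, and substituting $\optalg(S_n) = n-1$ from Theorem~\ref{optvsgreedy} yields $\optalgp(S_n) = 2$. Applying it instead to $\GI$ gives $\GI(S_n) + \GV(S_n) = n+1$, and substituting $\GI(S_n) = 1$ yields $\GV(S_n) = n$. A further benefit of routing through Observation~\ref{isvc} is that feasibility of $\optalgp$ as a vertex cover comes for free, so I would not need a separate verification that the set it produces actually covers every edge.

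It then remains only to compare the two quantities: $\optalgp(S_n) = 2 < n = \GV(S_n)$ holds exactly when $n \geq 3$, which is precisely the range asserted in the statement, so the strict inequality holds across the entire infinite family $\{S_n : n \geq 3\}$. I do not expect any genuinely hard step here, since the corollary is essentially a restatement of Theorem~\ref{optvsgreedy} under the \IS/\VC complementation bijection; the one place to be careful is the bookkeeping of the vertex count, as using $|S_n| = n$ in place of $n+1$ would shift both complemented values and is the most likely source of an off-by-one error.
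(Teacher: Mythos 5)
Your proposal is correct and follows exactly the paper's own proof: the paper likewise applies Observation~\ref{isvc} together with Theorem~\ref{optvsgreedy}, computing $\optalgp(S_n)=n+1-\optalg(S_n)=2$ and $\GV(S_n)=n+1-\GI(S_n)=n$ using the correct vertex count $|S_n|=n+1$. The comparison $2<n$ for $n\geq 3$ then gives the claim, just as you argue.
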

\begin{proof}
Using Observation~\ref{isvc} and Theorem~\ref{optvsgreedy}, 
we have that $\optalgp(S_n)=n+1-\optalg(S_n)=2$ and
$\GV(S_n)=n+1-\GI(S_n)=n$. \qed
\end{proof}

Finally, for \DS, we have a similar result.

\begin{corollary} \label{dsotopt}
For \DS, there exists an infinite family of graphs, $S_n$ for $n\geq 3$, 
and an online algorithm, \optalgp, such that $\optalgp(S_n)$ $<$ $\GD(S_n)$.
\end{corollary}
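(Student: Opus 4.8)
The plan is to analyze the complement algorithm \optalgp directly as a \DS algorithm on the star graphs $S_n$, computing both $\GD(S_n)$ and $\optalgp(S_n)$ from scratch. Unlike the \VC case in Corollary~\ref{vcnotopt}, there is no identity of the form $\alg(G)+\algp(G)=n$ available for domination, so the bound cannot be transferred mechanically and each quantity must be evaluated by hand. Throughout I would use that $S_n$ has a center $s$ adjacent to $n$ leaves, that \GD coincides with \GI, and that for a minimization problem $\alg(G)=\max_\phi \alg(\phi(G))$.

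First I would pin down $\GD(S_n)$. Presenting all $n$ leaves before $s$ forces \GD to accept every leaf (when each leaf arrives, its only neighbor $s$ is not yet accepted) and then to reject $s$; the accepted set is the set of all leaves, which dominates $S_n$ and has size $n$. Since \GD always produces an independent set, it can never accept more than $n$ vertices, so $\GD(S_n)=n$.

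Second, and this is the crux, I would show that on every ordering \optalgp accepts the center $s$ together with exactly one leaf, giving $\optalgp(S_n)=2$. The key observation is that $s$ is always accepted: if $s$ arrives first it is taken by the first-vertex rule; if it arrives second it is adjacent to the leaf that arrived first and is taken by the second-vertex rule; and if it arrives at position three or later it already has at least two revealed neighbors and is taken by the many-neighbors rule. Because $s$ is accepted, the output is always a dominating set, so no ordering yields the score $\infty$. A short case check on the position of $s$ then shows that, besides $s$, precisely one leaf is accepted (the first requested vertex, or the second vertex when $s$ is requested first), while all remaining leaves are rejected; hence the worst-case score is $2$.

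The main obstacle is the feasibility bookkeeping in this second step: one must verify that \optalgp, designed via complementation as an independent-set algorithm rather than a domination algorithm, nevertheless never leaves a vertex undominated on $S_n$, which reduces exactly to the claim that $s$ is accepted under every arrival order. Once that is established, combining the two computations gives $\optalgp(S_n)=2<n=\GD(S_n)$ for all $n\geq 3$, as required.
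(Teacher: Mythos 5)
Your proof is correct, and it takes a noticeably more self-contained route than the paper's. The paper obtains $\GD(S_n)=n$ exactly as you do (leaves first, $s$ last), but for the other algorithm it simply imports $\optalgp(S_n)=2$ from the vertex cover computation in Corollary~\ref{vcnotopt} (which rests on the identity $\alg(G)+\algp(G)=n$ of Observation~\ref{isvc}) and then settles feasibility in one line via the general fact that a vertex cover of a connected graph is a dominating set. You instead re-derive $\optalgp(S_n)=2$ from scratch by a case analysis on the position of $s$, and your feasibility argument is the star-specific observation that \optalgp accepts $s$ under every arrival order, so every output dominates. Your remark that the complementation identity is unavailable for domination correctly identifies why some feasibility argument is needed, but it slightly overstates the consequence: the paper shows the numerical bound \emph{can} be transferred, since once every output of \optalgp is known to be a dominating set, its worst-case \DS score coincides with its worst-case \VC score. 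What your approach buys is independence from the earlier corollaries and an explicit picture of what \optalgp actually accepts ($s$ plus exactly one leaf); what the paper's approach buys is brevity and a feasibility argument (vertex cover implies dominating set in connected graphs) that generalizes beyond stars.
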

\begin{proof}
Requesting $s$ last
ensures that $\GD$ accepts $n$ vertices. It can never accept
all $n+1$ vertices, so $\GD(S_n)=n$.
On the other hand, $\optalgp(S_n)=2$ (as in the proof of 
Corollary~\ref{vcnotopt}).
We note that a vertex cover is also a dominating set in connected graphs. 
This means that $\optalgp$ always produces a dominating set in $S_n$. \qed
\end{proof}

\section{Optimality of Greedy Algorithms on Freckle Graphs}
\label{freckle}
For a graph, $G$, we let
\begin{itemize}
\item $k$ denote the number of isolated vertices,
\item $G'$ denote the graph induced by the non-isolated vertices,
\item \stor be a maximum independent set in $G'$, and
\item \lille be a minimum inclusion-maximal independent set in $G'$ 
(that is, a smallest independent set such that including any additional vertex in the set would cause it to no longer be independent). 
\end{itemize}
Note that $|\lille|$ is also known as the \emph{independent domination number} of $G'$ (see \cite{idn} for more information).

Using this notation, we define the following class of graphs.

\begin{definition}
A graph, $G$, is a
\emph{Freckle Graph} if $k+|\lille| \geq \ois(G')$.
\end{definition}
Note that all graphs where at least half the vertices are isolated are
Freckle Graphs.
If the definition was changed to this (which might be less artificial), the results presented here would still hold, but our definition gives stronger results.
The name comes from the idea that such a graph in many cases has a lot of isolated 
vertices (freckles). Furthermore, any graph can be turned into a Freckle Graph by adding enough isolated vertices.
Note that a complete graph is a Freckle Graph. To make the star graph, $S_n$, a freckle graph, we
need to add $n-2$ isolated vertices.
We show that \GI and \GV are online optimal on all Freckle Graphs.
For the proof, we need a little more terminology and a helpful lemma.

\begin{definition}
A request is \emph{pointless} if it is to a vertex which has a neighbor which was already accepted.
\end{definition}

\begin{definition}
For a graph, $G$, an adversary is said to be \emph{conservative} if it does not make pointless requests unless only such requests remain.
\end{definition}

\begin{lemma}\label{conservative}
For \IS,
for every graph, $G$, there exists a conservative adversary, \adv, which
ensures that every algorithm accepts an independent set in $G$ of size at most $\ois(G)$.
\end{lemma}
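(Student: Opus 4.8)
The plan is to start from an adversary that already realizes $\ois(G)$ and then make it conservative without helping any algorithm, comparing the two scenarios through a ``shadow'' algorithm. Viewing the request process as a finite game in which the adversary reveals vertices and the algorithm accepts or rejects while conditioning only on the revealed subgraph, there is an adaptive adversary $A$ (not necessarily conservative) that holds every algorithm to an independent set of size at most $\ois(G)$; this is the adversary witnessing the online independence number, in the same spirit as the optimal adversary for the online chromatic number of \cite{GKL99}. I would also record the single structural fact driving everything: accepting a vertex with an already-accepted neighbour destroys independence, so any algorithm returning a feasible solution must reject every pointless request. Hence it suffices to bound algorithms that reject all pointless requests, and every accepted vertex is non-pointless when accepted.

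To build the conservative \adv, I would have it run an internal simulation of $A$ and defer pointless requests. Whenever $A$ issues a pointless request while some non-pointless vertex is still unrevealed, \adv withholds that vertex, feeds its simulated copy of $A$ the forced answer ``reject'', and advances to $A$'s next request; \adv presents a vertex to the real algorithm only when $A$'s current request is non-pointless. Once $A$ is exhausted, \adv presents all withheld vertices. Because acceptances are permanent, a withheld vertex remains adjacent to an accepted vertex and hence stays pointless, so this final block is issued exactly when only pointless requests remain; this is precisely conservativity.

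Proving that \adv still forces every algorithm to at most $\ois(G)$ is the step I expect to be the main obstacle, since under \adv and under $A$ the algorithm genuinely sees different request sequences (the pointless vertices arrive at different times), so one cannot simply claim identical behaviour. I would resolve this with a shadow algorithm $B$ playing against $A$: $B$ identifies pointless requests from its own acceptances together with the revealed edges, rejects each of them, and on the non-pointless requests runs a copy of the given algorithm \alg fed only the induced substructure on the non-pointless vertices seen so far. Since \adv presents exactly $A$'s non-pointless requests, in the same order and with the same induced structure that $B$ hands to its internal copy of \alg, the two interactions feed $A$ the same decision stream and present \alg the same inputs. Consequently \alg against \adv and the simulated \alg inside $B$ accept exactly the same vertices, while the deferred (respectively pointless) requests yield no acceptances on either side; therefore the number of vertices \alg accepts against \adv equals the number $B$ accepts against $A$, which is at most $\ois(G)$. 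As \alg was arbitrary and \adv is conservative, the lemma follows.

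The remaining things I would verify are that $B$ is a legitimate online algorithm and is feasible whenever \alg is, since it accepts the same independent set and rejects everything pointless, and that the acceptance sets, and hence the pointless/non-pointless classifications, agree step for step across the two runs so the simulation stays consistent. The one genuinely foundational point is the existence of $A$, that is, that the finite request game has value $\ois(G)$; this determinacy statement underlies the very definition of the online independence number and is what lets the whole reduction get off the ground.
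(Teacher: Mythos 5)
Your proof is correct, but it takes the dual route to the paper's. The paper argues by contradiction entirely on the algorithm side: it assumes some algorithm \alg accepts at least $\ois(G)+1$ vertices against \emph{every} conservative adversary, and from it builds an algorithm $\alg'$ for the unrestricted game that rejects pointless requests outright and plays \alg on a virtual copy of the input from which all pointless requests have been deleted; since every acceptance by \alg transfers to $\alg'$, this yields $\alg'(G)\geq \ois(G)+1$, contradicting the definition of $\ois(G)$. No conservative adversary is ever constructed there, and no optimal adversary is needed as an object. You instead build the conservative adversary explicitly, by taking an optimal adversary $A$, deferring its pointless requests to a final block, and transferring $A$'s guarantee through the shadow algorithm $B$; note that your $B$ is essentially the paper's $\alg'$, so the core observation (pointless requests are forced rejections and can be filtered out by a simulation) is identical in the two proofs. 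The trade-off is quantifier strength versus foundations. Your argument proves the lemma as literally stated---a single conservative adversary that works against every algorithm---but needs the existence of $A$, i.e., that the finite revelation game has a value. You are right to flag this, though your justification is slightly off: determinacy does not ``underlie the very definition'' of $\ois(G)$, which is merely a maximum over algorithms of a minimum over orderings; extracting one adversary that is simultaneously good against all algorithms is an additional fact, provable by backward induction once one observes that the adversary's legal moves depend only on the abstract revealed graph (so the game is a finite perfect-information game). The paper's proof avoids any such appeal, but for exactly that reason what it literally establishes is the weaker, per-algorithm statement (for every algorithm there exists a conservative adversary holding it to $\ois(G)$): its opening assumption is the negation of that weaker statement, not of the lemma as stated, and closing that quantifier gap is precisely the determinacy issue you make explicit. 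Since the weaker form is all that the proof of Theorem~\ref{isgod} uses, both arguments serve the paper's needs; yours is the more faithful proof of the lemma's actual wording, at the price of the game-value fact.
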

\begin{proof}
Assume, for the sake of contradiction, that there exists an algorithm \alg,
which accepts an independent set of size at least $\ois(G)+1$ against
every conservative adversary.
We now describe an algorithm, $\alg'$, which accepts an independent set of size at least 
$\ois(G)+1$ against any adversary. This contradicts the definition of $\ois(G)$.

Intuitively, since pointless requests must be rejected by any algorithm, 
$\alg'$ can reject pointless requests and otherwise ignore them, reacting
as $\alg$ would against a conservative adversary on the other requests.
$\alg'$ works as follows: It maintains a virtual graph, $G'$, which, 
inductively, is a copy of
the part of $G$ revealed so far, but without the pointless requests.
When a new non-pointless vertex is requested, the same vertex is added to $G'$,
including only the edges to previous vertices which are not pointless
(the pointless requests are not in $G'$).
$\alg'$ now accepts this request if \alg accepts the corresponding request 
in $G'$. When a pointless request is made, $\alg'$ rejects it and does not 
add it to $G'$.

Note that every time \alg accepts a vertex in $G'$, $\alg'$ accepts the 
corresponding vertex in $G$.
Thus, $\alg'(G) \geq \alg(G') \geq \ois(G)+1$ which is a contradiction.
\qed
\end{proof}

\begin{theorem} \label{isgod}
For any algorithm, \alg, for \IS, and for any Freckle Graph, $G$, 
$\GI(G) \geq \alg(G)$.
\end{theorem}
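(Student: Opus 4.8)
The plan is to prove the equivalent statement $\GI(G)=\ois(G)$ and then read off the claim. Since \GI is itself an \IS algorithm and $\ois(G)$ is the best value achievable by any online algorithm, $\GI(G)\le\ois(G)$ is immediate, so the whole content is the reverse inequality: no online algorithm can beat greedy on a Freckle Graph. Concretely, I would first pin down the exact value $\GI(G)=k+|\lille|$, and then show that for every algorithm \alg an adversary can force $\alg(G)\le k+|\lille|$. Together these give $\GI(G)=\ois(G)\ge\alg(G)$ for all \alg.

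Computing $\GI(G)$ is the easy first step. On any isolated vertex \GI accepts, since it never has an accepted neighbour, so \GI always takes all $k$ isolated vertices; on $G'$ it always outputs an inclusion-maximal independent set, and the adversary can force the smallest such set by requesting a minimum inclusion-maximal independent set $\lille$ first and the remaining vertices of $G'$ afterwards, each of which then has an accepted neighbour and is rejected. Hence $\GI(G)=k+|\lille|$.

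For the upper bound on an arbitrary \alg, the key device is disguise: in the known-graph model the revealed subgraph is consistent with many subgraphs of $G$, so the adversary may present pairwise non-adjacent vertices that \alg cannot tell apart as ``isolated'' or as vertices of a fixed minimum inclusion-maximal independent set $\lille$ of $G'$. I would let the adversary feed such edge-free requests one at a time and branch on the behaviour of \alg. (i) If \alg accepts $|\lille|$ of them, declare those accepted vertices to be $\lille$ itself; since $\lille$ dominates $G'$, every remaining non-isolated vertex now has an accepted neighbour and must be rejected, so \alg gains at most the remaining isolated vertices and ends with at most $k+|\lille|$. (ii) If instead \alg rejects $k$ disguised vertices before accepting $|\lille|$ of them, declare those $k$ rejected vertices to be the isolated vertices, so \alg has wasted all $k$ free vertices, and continue with a conservative online adversary for $G'$ (available by Lemma~\ref{conservative}), which caps the total acceptance of \alg inside $G'$ at $\ois(G')$. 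The Freckle inequality $k+|\lille|\ge\ois(G')$ then yields a final count of at most $\ois(G')\le k+|\lille|$.

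The main obstacle is case (ii): I must make the $\ois(G')$ bound genuinely apply even though the adversary has already committed to an edge-free prefix of accepted requests, namely the up to $|\lille|-1$ disguised vertices that \alg accepted before the rejection count reached $k$. The definition of $\ois(G')$ quantifies over full orderings of $G'$, so transferring the guarantee to an ordering with a forced independent prefix requires care: I would use Lemma~\ref{conservative} to discard pointless requests and to regard \alg restricted to the $G'$-requests as a legitimate online algorithm for $G'$, and argue that the disguised prefix can be chosen to coincide with the opening independent moves of the conservative $G'$-adversary, so that the $\ois(G')$ cap still holds. Pinning down this coupling, and thereby showing that accepting ``safe-looking'' vertices early is exactly what forces the algorithm to squander its isolated vertices, is where the real work lies; everything else is bookkeeping built on Lemma~\ref{conservative} and the Freckle condition.
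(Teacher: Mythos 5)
Your proposal is correct and takes essentially the same route as the paper's proof: establish $\GI(G)=k+|\lille|$, have the adversary present edge-free requests until \alg either accepts $|\lille|$ of them (then declare them to be \lille, whose maximality blocks all further $G'$-vertices) or rejects $k$ of them (then declare those isolated and continue with the conservative adversary from Lemma~\ref{conservative}, finishing via $\ois(G')\leq k+|\lille|$). The coupling you flag as ``the real work'' in case (ii) is precisely the step the paper handles in a single sentence, asserting that an accepted edge-free prefix of size at most $|\lille|-1$ is consistent with optimal conservative-adversary play on $G'$.
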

\begin{proof}
First, we note that \GI will accept the $k$ isolated vertices. In $G'$, it will accept an inclusion-maximal independent set.
Since we take the worst ordering, it accepts $|\lille|$ vertices. We get $\GI(G)=k+|\lille|$.
Now we describe an adversary strategy which ensures that an arbitrary algorithm, \alg, accepts at most $k+|\lille|$ vertices.

The adversary starts by presenting 
isolated vertices until \alg
either accepts $|\lille|$ vertices or rejects $k$ vertices. 

If \alg accepts $|\lille|$ vertices, the adversary decides
that they are exactly those in \lille. This means that \alg will accept
no other vertices in $G'$. Thus, it accepts at most $k+\lille$ vertices.

If \alg rejects $k$ vertices, the adversary decides that they are the $k$ isolated vertices.
We now consider $G'$.
At this point, up to $|\lille|-1$ isolated vertices may have been requested and accepted.
Using Lemma~\ref{conservative}, we see that requesting independent vertices
up to this point is optimal play from an adversary playing against an algorithm
which has accepted all of these isolated requests. Following this optimal 
conservative adversary strategy ensures that the algorithm accepts an 
independent set of size at most $\ois(G') \leq k+|\lille| = \GI(G)$.
\qed
\end{proof}

\begin{corollary}
\label{freckle;greedy}
For any Freckle Graph, $G$,  $\GI(G)=\ois(G)$.
\end{corollary}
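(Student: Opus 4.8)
The plan is to derive the corollary almost immediately from Theorem~\ref{isgod}, so the only real work is to handle the definition of $\ois(G)$ correctly from both sides. Recall that $\ois(G)$ is defined as the largest number attainable as $\alg(G)$ over all \IS algorithms \alg; equivalently, $\ois(G)=\max_{\alg}\alg(G)$. The strategy is to establish the two inequalities $\GI(G)\le\ois(G)$ and $\GI(G)\ge\ois(G)$ separately and then combine them.

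First I would observe that \GI is itself an \IS algorithm, so its value $\GI(G)$ is one of the numbers over which the maximum defining $\ois(G)$ is taken. Hence $\ois(G)\ge\GI(G)$ directly from the definition, with no further argument needed. This is the easy direction and requires nothing beyond recognizing that \GI is a legitimate online algorithm for the problem.

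For the reverse inequality, I would invoke Theorem~\ref{isgod}, which states that $\GI(G)\ge\alg(G)$ for every \IS algorithm \alg on any Freckle Graph $G$. Applying this to an algorithm that witnesses the online independence number (one for which $\alg(G)=\ois(G)$, which exists by the definition of $\ois(G)$ as an attained maximum) yields $\GI(G)\ge\ois(G)$. Combining $\GI(G)\ge\ois(G)$ with $\ois(G)\ge\GI(G)$ gives $\GI(G)=\ois(G)$, as desired.

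There is no genuine obstacle here: all the difficulty has already been absorbed into Theorem~\ref{isgod} and its supporting lemma on conservative adversaries. The only point demanding care is to treat $\ois(G)$ consistently as a maximum over algorithms, using it once to bound $\ois(G)$ from below by the particular algorithm \GI and once to supply an algorithm achieving the value $\ois(G)$ to which Theorem~\ref{isgod} can be applied. If anything, the subtlety is purely definitional rather than combinatorial, so I would keep the write-up to a couple of sentences invoking the definition and the theorem.
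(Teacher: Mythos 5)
Your proposal is correct and matches the paper's intent exactly: the paper states this corollary without a separate proof precisely because it follows from Theorem~\ref{isgod} combined with the definition of $\ois(G)$, via the same two inequalities you give ($\ois(G)\ge\GI(G)$ since \GI is an online algorithm, and $\GI(G)\ge\ois(G)$ by applying Theorem~\ref{isgod} to an algorithm attaining $\ois(G)$). No gaps.
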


Intuitively, \GI becomes optimal on Freckle Graphs because
the isolated vertices allow it to accept a larger independent
set, even though it still does poorly on the connected part of the graph.
Any algorithm, which outperforms \GI on the connected part of the graph, must
reject a large number of the isolated vertices in order to keep this advantage.

In contrast, for vertex cover adding isolated vertices to a graph does not make
\GV accept fewer vertices. \GV becomes optimal on Freckle
Graphs because the isolated vertices force any other online algorithm
to accept some of those isolated vertices.

\begin{corollary} \label{vcgod}
For any algorithm, \alg, for \VC, and for any Freckle Graph, $G$, 
$\GV(G) \leq \alg(G)$.
\end{corollary}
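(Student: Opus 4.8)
The plan is to reduce the vertex cover statement to the independent set statement already proved in Theorem~\ref{isgod}, using the complementation bijection between \IS and \VC algorithms introduced in Observation~\ref{isvc}. The key structural fact is that for any algorithm $\alg$ solving \VC, the complementary algorithm $\overline{\alg}$ solves \IS, and conversely; moreover, by Observation~\ref{isvc}, on a graph $G$ with $n$ vertices we have $\alg(G)+\overline{\alg}(G)=n$. Since $\overline{\overline{\alg}}=\alg$, this bijection lets us translate inequalities about accepted-set sizes between the two problems simply by negating and adding $n$.

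First I would fix an arbitrary algorithm $\alg$ for \VC and an arbitrary Freckle Graph $G$ on $n$ vertices. I would pass to the complementary algorithm $\overline{\alg}$, which is an algorithm for \IS, and apply Theorem~\ref{isgod} to it: since $G$ is a Freckle Graph, $\GI(G)\geq\overline{\alg}(G)$. Next I would rewrite both sides using Observation~\ref{isvc}. On the left, $\GI=\overline{\GV}$ (equivalently $\GV=\overline{\GI}$, as noted in the preliminaries), so $\GI(G)=n-\GV(G)$. On the right, $\overline{\alg}(G)=n-\alg(G)$. Substituting gives $n-\GV(G)\geq n-\alg(G)$, and cancelling $n$ and negating reverses the inequality to yield $\GV(G)\leq\alg(G)$, which is exactly the claim.

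The only point requiring care is to make sure the complementation identity $\GI=\overline{\GV}$ is invoked in the correct direction and that the $+\algp(G)=n$ equation of Observation~\ref{isvc} is applied to the right pair of algorithms: it must be applied both to the greedy pair $(\GI,\GV)$ and to the pair $(\overline{\alg},\alg)$. Since Observation~\ref{isvc} is stated for an \IS algorithm and its complement, applying it to $\overline{\alg}$ (an \IS algorithm) and its complement $\alg$ is legitimate, and applying it to $\GI$ and $\GV=\overline{\GI}$ is likewise legitimate. I expect no genuine obstacle here; the whole argument is a short, purely algebraic consequence of the already-established independent set optimality together with the size-complementarity of the bijection, so the main ``difficulty'' is merely bookkeeping the direction of the inequality under negation.

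\begin{proof}
Let $\alg$ be an arbitrary algorithm for \VC and let $G$ be a Freckle Graph with $n$ vertices. Then $\overline{\alg}$ is an algorithm for \IS, so by Theorem~\ref{isgod}, $\GI(G)\geq\overline{\alg}(G)$. By Observation~\ref{isvc} applied to the pair $(\overline{\alg},\alg)$ we have $\overline{\alg}(G)=n-\alg(G)$, and applied to the pair $(\GI,\GV)$ (recall $\GV=\overline{\GI}$) we have $\GI(G)=n-\GV(G)$. Substituting these into $\GI(G)\geq\overline{\alg}(G)$ gives $n-\GV(G)\geq n-\alg(G)$, hence $\GV(G)\leq\alg(G)$. \qed
\end{proof}
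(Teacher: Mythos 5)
Your proof is correct and follows exactly the same route as the paper's: the paper proves this corollary by citing Theorem~\ref{isgod}, Observation~\ref{isvc}, and the identity $\GV = \overline{\GI}$, which are precisely the three ingredients you combine. You have merely written out the bookkeeping (applying the complementation identity to both the greedy pair and the pair $(\overline{\alg},\alg)$) that the paper leaves implicit.
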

\begin{proof}
This follows from Theorem~\ref{isgod}, Observation~\ref{isvc}, and the fact that $\GV = \overline{\GI}$. \qed
\end{proof}

\begin{corollary}
For any Freckle Graph, $G$,  $\GV(G)=\ovc(G)$.
\end{corollary}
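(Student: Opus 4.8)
The plan is to read off the equality directly from the one-sided inequality already established in Corollary~\ref{vcgod}, combined with the definition of the online vertex cover number $\ovc(G)$. Since \VC is a minimization problem, $\ovc(G)$ is defined as the \emph{smallest} value $\alg(G)$ attainable by any \VC algorithm, so I expect the two inequalities to come from opposite directions: one from the fact that $\GV$ is itself an algorithm, and the other from the universal upper bound of Corollary~\ref{vcgod}.

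First I would observe that $\GV$ is a legitimate online algorithm for \VC. Because $\ovc(G)$ is by definition the minimum of $\alg(G)$ over all \VC algorithms \alg, no algorithm can do better than this minimum; in particular $\GV$ cannot, so $\GV(G) \geq \ovc(G)$. For the reverse inequality I would invoke Corollary~\ref{vcgod}, which states $\GV(G) \leq \alg(G)$ for every \VC algorithm \alg on a Freckle Graph $G$. Applying this with an algorithm that witnesses the minimum, i.e.\ one for which $\alg(G) = \ovc(G)$ (such an algorithm exists by the very definition of $\ovc(G)$), yields $\GV(G) \leq \ovc(G)$. Combining the two gives $\GV(G) = \ovc(G)$.

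This argument is the exact mirror of the one behind Corollary~\ref{freckle;greedy}, with the roles of maximum and minimum interchanged because vertex cover is a minimization rather than a maximization problem. There is no real obstacle once Corollary~\ref{vcgod} is available; the only point requiring care is to keep straight that $\ovc$ is defined as a minimum, so that the membership of $\GV$ among all algorithms supplies the lower bound $\GV(G)\geq\ovc(G)$, while the universally quantified inequality of Corollary~\ref{vcgod} supplies the matching upper bound $\GV(G)\leq\ovc(G)$.
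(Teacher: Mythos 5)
Your proof is correct and matches the argument the paper intends (the corollary is stated without proof precisely because it follows immediately this way): Corollary~\ref{vcgod} applied to a witness of the minimum gives $\GV(G)\leq\ovc(G)$, while the fact that $\GV$ is itself a \VC algorithm gives $\GV(G)\geq\ovc(G)$ by the definition of $\ovc$ as a minimum. Nothing is missing.
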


For \DS something similar holds, but only one isolated vertex is needed.
\GD becomes optimal because any dominating set has to include that
isolated vertex.

\begin{theorem} \label{dsgod}
For any algorithm, \alg, for \DS and for any graph, $G$, with at least one isolated vertex, 
$\GD(G) \leq \alg(G)$.
\end{theorem}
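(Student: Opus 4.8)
The plan is to show that \GD is worst-case optimal on any graph $G$ with an isolated vertex by proving two things: an upper bound $\GD(G) \le k + |\lille|$-style expression for what \GD accepts in the worst ordering, and a matching adversary lower bound forcing any algorithm \alg to accept at least that many vertices. Recall that for minimization problems $\GD(G)$ is the \emph{maximum} over orderings of \GD's score, so I must exhibit the worst ordering for \GD and bound it, then show no algorithm can do better than that bound against a suitable adversary.

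First I would analyze \GD directly. Since \GD accepts $v$ iff no neighbor of $v$ has been accepted, it accepts exactly a maximal independent set, and every isolated vertex is always accepted. In the worst ordering, \GD accepts a minimum inclusion-maximal independent set on the non-isolated part plus all $k$ isolated vertices, so $\GD(G) = k + |\lille|$. The key structural point is that a maximal independent set is automatically a dominating set, so \GD always produces a feasible (dominating) solution; this is what makes its score finite and equal to $k + |\lille|$ rather than $\infty$.

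Next I would build the adversary lower bound. The crucial observation is that an isolated vertex $u$ can only be dominated by itself, so \emph{every} feasible dominating set must contain $u$; hence any algorithm is forced to accept each isolated vertex, or else risk producing an infeasible solution scored $\infty$. The adversary presents isolated vertices and the non-isolated part in a way that mirrors the \IS argument: by Observation~\ref{isvc} the complementary view and the conservative-adversary machinery of Lemma~\ref{conservative} apply, letting the adversary force any algorithm to accept at least $k + |\lille|$ vertices. The technical care here is handling an algorithm that \emph{rejects} an isolated vertex: such an algorithm risks infeasibility, so the adversary simply declares a rejected vertex to have been isolated, yielding score $\infty \ge \GD(G)$, which trivially satisfies the claim.

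The main obstacle I expect is the feasibility bookkeeping for \DS, which differs from \IS and \VC in that $S$ need only dominate $G$ \emph{after all vertices are revealed}; an algorithm may legitimately reject vertices early, gambling that they will be dominated later. I must rule out the loophole where an algorithm rejects an isolated vertex hoping to dominate it—this is impossible precisely because an isolated vertex has no neighbors, so the adversary's power to retroactively assign which presented vertices are the isolated ones forces the bound. Once this case analysis is pinned down (any algorithm either accepts all $k$ isolated vertices and then faces the \IS-style conservative adversary on $G'$, or fails to and is penalized with $\infty$), the inequality $\GD(G) \le \alg(G)$ follows by combining the $\GD(G) = k + |\lille|$ computation with the adversary forcing $\alg(G) \ge k + |\lille|$.
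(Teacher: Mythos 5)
Your central claim, $\GD(G)=k+|\lille|$, is wrong, and the error propagates through the whole argument. Since \DS is a minimization problem, $\GD(G)$ is the \emph{maximum} over orderings of the number of accepted vertices (you state this correctly, but then compute the wrong quantity). \GD always accepts a maximal independent set, so its worst ordering is the one that makes it accept the \emph{largest} independent set: the adversary presents the $k$ isolated vertices and a maximum independent set \stor of $G'$ first, giving $\GD(G)=k+|\stor|$, not $k+|\lille|$. (The formula $k+|\lille|$ is the right one for \GI on the maximization problem \IS, where the worst ordering minimizes the accepted set; you have imported it into the wrong problem.) A concrete counterexample to your formula: for $S_n$ plus one isolated vertex, requesting the leaves first forces \GD to accept all $n$ leaves plus the isolated vertex, so $\GD=n+1$, whereas $k+|\lille|=2$. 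Because of this, the lower bound your adversary delivers, $\alg(G)\geq k+|\lille|$, is too weak to conclude $\GD(G)\leq\alg(G)$.

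The second problem is the adversary itself. Observation~\ref{isvc} and Lemma~\ref{conservative} are machinery for \IS, where the adversary's goal is to force an algorithm to accept \emph{few} vertices; for the \DS lower bound the adversary must force every algorithm to accept \emph{many} vertices (at least $k+|\stor|$) or become infeasible, so a conservative-adversary argument on $G'$ points in the wrong direction and is not needed at all. The correct adversary (the one in the paper) is much simpler: present $k+|\stor|$ vertices with no edges among them. If \alg rejects any one of them, the adversary declares that vertex to be one of the isolated vertices of $G$ (possible since $k\geq 1$), and it can never be dominated, so $\alg(G)=\infty$; if \alg accepts them all, then $\alg(G)\geq k+|\stor|=\GD(G)$. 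Your key insight---that an isolated vertex can only be dominated by itself, so the adversary may retroactively declare a rejected vertex isolated---is exactly the heart of the paper's proof; but it must be applied to all $k+|\stor|$ presented vertices, not just $k$ of them, and the remaining structure of $G'$ never needs to be revealed adaptively.
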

\begin{proof}
Recall that $k$ denotes the number of isolated vertices in $G$,
and $G'$ denotes the subgraph of $G$ induced by the non-isolated
vertices.
Note that \GD always produces an independent set. Thus, \GD 
accepts at most $k+|\stor|$ vertices; it accepts exactly 
the $k$ isolated vertices and the vertices in \stor
if these are presented first.

Let an algorithm, \alg, be given. The adversary can start by presenting $k+|\stor|$ isolated vertices. If at least one of these vertices is not accepted by \alg, the adversary can
decide that this was in fact an isolated vertex, which can now no longer be dominated. Thus, $\alg(G)= \infty$. If \alg accepts all the presented vertices, it gets
a score of at least $k+|\stor|$. \qed
\end{proof}

\begin{corollary}
For any graph, $G$, with an isolated vertex,  $\GD(G)=\ods(G)$.
\end{corollary}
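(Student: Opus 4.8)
The plan is to sandwich $\GD(G)$ between $\ods(G)$ from both sides, using Theorem~\ref{dsgod} for one direction and the definition of the online domination number for the other. Recall that $\ods(G)$ is the smallest value $\alg(G)$ attainable by any online algorithm \alg for \DS; since \DS is a minimization problem, establishing the equality amounts to showing that \GD attains this minimum.

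For the first inequality, I would let \alg be an online algorithm for \DS witnessing this minimum, so that $\alg(G)=\ods(G)$. Applying Theorem~\ref{dsgod} (which is valid because $G$ has at least one isolated vertex) then yields $\GD(G) \leq \alg(G) = \ods(G)$.

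For the reverse inequality, I would simply observe that \GD is itself an online algorithm for \DS. Hence $\GD(G)$ is one of the values over which the minimum defining $\ods(G)$ is taken, giving $\ods(G) \leq \GD(G)$. Combining the two inequalities gives $\GD(G)=\ods(G)$.

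There is essentially no obstacle here, as the statement is an immediate corollary of Theorem~\ref{dsgod}; the only point requiring care is tracking the direction of the inequalities, since for a minimization problem the online domination number is a minimum, and so the theorem's upper bound on \GD is exactly what certifies its optimality.
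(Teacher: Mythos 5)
Your proposal is correct and matches the paper's (implicit) reasoning: the paper states this corollary without proof precisely because it follows immediately from Theorem~\ref{dsgod} combined with the fact that \GD is itself an online algorithm, which is exactly your two-inequality sandwich. The direction-tracking you note is the only content, and you handle it correctly.
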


\section{Adding Isolated Elements in Other Problems}\label{adding}

These results, showing that adding isolated vertices to a graph
can make the greedy algorithms for \IS and \VC optimal, lead one
to ask if similar results hold for other problems. The answer is
clearly ``yes'': We give similar results for \M and \MOS 
(including \IM as a special case).

We consider \M in the edge-arrival model, so each request is an
 edge which must be accepted or rejected.
 If one or both of the vertices that are endpoints of the edge
 have not been revealed yet, they are revealed with the edge.
The goal is to
accept as large a set, $S$, as possible, under the restriction
that $S$ is a matching. Thus, no two edges in $S$ can be incident
to each other.
One can define $\om(G)$, the \emph{online matching
number} of $G$, analogously to the online independence number,
to be the largest number such that there exists an algorithm, \alg,
for \M with $\alg(G)=\om(G)$. Let \GM be the natural greedy algorithm
for \M, which accepts any edge not incident to any edge already
accepted.  Instead of adding isolated vertices, we add
isolated edges, edges which do not share any vertices with any
other edges. 
The number of isolated edges to add would be $k$,
where $\om(G) \leq \GM(G)+k$. We get the following theorem:
Let $G'$ denote the graph $G$ induced by the non-isolated edges.

\begin{theorem}\label{matching}
Let $G$ be a graph where $\om(G') \leq \GM(G')+k$.
For \M, we have that
\[
	\GM(G)=\om(G).
\]
\end{theorem}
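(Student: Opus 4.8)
The plan is to prove the two inequalities $\om(G) \ge \GM(G)$ and $\om(G) \le \GM(G)$ separately. The first is immediate: \GM is itself a legal online algorithm for \M, and $\om(G)$ is by definition the largest value any online algorithm can guarantee, so $\om(G) \ge \GM(G)$. All of the work is therefore in the upper bound. I would first record that $\GM(G) = k + \GM(G')$: an isolated edge shares no vertex with any other edge, so \GM accepts it no matter when it is requested and no matter what else has been accepted; hence for every ordering $\phi$ the greedy run splits as the $k$ isolated edges plus the greedy run induced on $G'$, and minimizing over $\phi$ gives $\GM(G)=k+\GM(G')$. I would also note that $\GM(G')$ equals the size of a \emph{minimum maximal matching} of $G'$: greedy always outputs a maximal matching, and the adversary realizes the smallest one by presenting its edges first (greedy accepts them all) and every remaining edge afterwards (greedy rejects them all, by maximality). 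Fix such a minimum maximal matching $M^\star$ of $G'$, so $|M^\star| = \GM(G')$.

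Next I would establish the matching analogue of Lemma~\ref{conservative}. Call a requested edge \emph{pointless} if it is incident to an already accepted edge (any algorithm must reject it), and call an adversary \emph{conservative} if it makes no pointless request unless only pointless requests remain. By exactly the simulation argument of Lemma~\ref{conservative} --- an algorithm beating every conservative adversary could be converted into one beating every adversary, by rejecting and ignoring pointless edges while maintaining a virtual copy of the non-pointless requests --- there is a conservative adversary for $G'$ that forces every online algorithm to accept a matching of size at most $\om(G')$.

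With these tools, the upper bound follows the template of Theorem~\ref{isgod}. Against an arbitrary algorithm \alg, the adversary first presents pairwise vertex-disjoint edges (which appear isolated) until \alg has either accepted $\GM(G')=|M^\star|$ of them or rejected $k$ of them. In the first case, the adversary declares the accepted edges to be $M^\star$; since $M^\star$ is maximal in $G'$, every other edge of $G'$ is incident to it and is rejected, so \alg accepts at most the $k$ genuine isolated edges beyond $M^\star$, for a total of at most $k+|M^\star|=\GM(G)$. In the second case, the adversary declares the $k$ rejected edges to be the isolated edges (so \alg gains nothing from them) and declares the at most $|M^\star|-1$ accepted edges to be a partial matching inside $G'$; it then continues with the conservative optimal strategy on $G'$, which caps \alg's total acceptance in $G'$ at $\om(G')$. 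By hypothesis $\om(G') \le \GM(G')+k$, so again \alg accepts at most $\GM(G')+k=\GM(G)$. In both cases $\alg(G)\le \GM(G)$, and since \alg was arbitrary, $\om(G)\le \GM(G)$. A routine check is that enough disjoint edges are available to run the opening phase and that each accepted or rejected edge can be consistently reinterpreted as an isolated edge or an edge of $G'$; this is immediate from vertex-disjointness.

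The step I expect to be the main obstacle is the second case, specifically justifying that the edges \alg was allowed to accept during the opening (isolated-looking) phase can be embedded as the initial, conservative segment of the optimal adversary's play on $G'$ without raising \alg's eventual total above $\om(G')$. The point, as in Theorem~\ref{isgod}, is that a conservative optimal adversary may begin by offering pairwise disjoint edges, so an algorithm that accepts the first few of them is in exactly the state \alg is in; continuing the conservative optimal strategy from that state then yields the $\om(G')$ bound. Making this identification precise --- that any matching \alg has accepted so far coincides with a prefix of some conservative optimal run on $G'$ --- is the delicate part, and it is where the conservative-adversary lemma does the real work.
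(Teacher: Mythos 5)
Your proof is correct, but it takes a genuinely different route from the paper. You re-derive the whole adversary machinery directly in the matching setting: the decomposition $\GM(G)=k+\GM(G')$, the identification of $\GM(G')$ with a minimum maximal matching, a matching analogue of Lemma~\ref{conservative} (pointless $=$ incident to an accepted edge), and then the two-case opening-phase argument of Theorem~\ref{isgod}. The paper instead proves the theorem in one short paragraph by reduction: the edge-arrival game for \M on $G$ is exactly the vertex-arrival game for \IS on the line graph $L(G)$, isolated edges of $G$ are isolated vertices of $L(G)$, the hypothesis $\om(G')\leq\GM(G')+k$ translates precisely into $L(G)$ being a Freckle Graph (since $\GI(L(G'))=\GM(G')$ and $\ois(L(G'))=\om(G')$), and then Theorem~\ref{isgod}/Corollary~\ref{freckle;greedy} applies verbatim. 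What the reduction buys is that the delicate step you correctly flag at the end --- embedding the edges accepted during the isolated-looking opening phase as the prefix of a conservative optimal adversary's run on $G'$ --- never has to be re-argued: it is inherited, already settled once, from the proofs of Lemma~\ref{conservative} and Theorem~\ref{isgod}. What your direct argument buys is self-containedness and generality: it shows the technique itself transfers to set systems with no graph structure, which is essentially what the paper does anyway when it proves Theorem~\ref{mos} for \MOS (of which matching is a special case). One small note: in your case analysis you should also observe that an algorithm that accepts an edge incident to an already accepted edge simply scores $-\infty$, so ``must reject'' really means ``rejects or forfeits''; this is harmless but worth stating, and the same convention is what makes the conservative-adversary simulation sound.
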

\begin{proof}
 Note that a matching in a graph $G=(V,E)$
corresponds to an independent set in the line graph $L(G)$,
where the vertices of $L(G)$ correspond to the edges of $G$,
 and two vertices of $L(G)$ are adjacent, if and only if the
corresponding edges are incident to each other in $G$. Thus,
since \GI is optimal for the graph with $\ois(L(G))-\GI(L(G))$
isolated vertices (or more), \GM is optimal for the graph with $\om(G')-\GM(G')$
isolated edges (or more). \qed
\end{proof}


All of the above problems are in the class AOC~\cite{BFKM15}, so
one is tempted to ask if all problems in AOC have a similar property,
or if all maximization problems in AOC do. This is not the case.
\begin{definition} \label{sgeasydef}\label{wdef}
A problem is in AOC \emph{(Asymmetric Online Covering)} if the following
hold:
\begin{itemize}
\item Each request must be either accepted or rejected on arrival.
\item The cost (profit) of a feasible solution is the number of accepted requests.
\item The cost (profit) of an infeasible solution is $\infty$ ($-\infty$).
\item For any request sequence, there exists at least one feasible solution.
\item A superset (subset) of a minimum cost (maximum profit) 
solution is feasible.
\end{itemize}
\end{definition}

An upper bound on the advice complexity of all problems in AOC was
proven in \cite{BFKM15}, along with a matching lower bound for a
subset of these problems, the AOC-complete problems.
\begin{theorem}
There exists a maximization problem in the class AOC, where adding
isolated  requests which are independent of all others in the sense
that these requests can be added to any feasible set, maintaining feasibility,
does not make the natural greedy algorithm optimal.
\end{theorem}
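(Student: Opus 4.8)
The plan is to construct an explicit maximization problem in AOC and an explicit greedy algorithm together with a family of request sequences (and the corresponding ``isolated'' augmentation) witnessing that greed remains suboptimal no matter how many isolated requests are added. First I would recall what ``isolated'' means in the statement: an isolated request is one that can be added to \emph{any} feasible set while preserving feasibility. The key observation is that such a request is trivially accepted by the natural greedy algorithm and, crucially, can also be freely accepted by \emph{any} other algorithm at no cost to feasibility. This is the structural reason the \IS trick worked: there, rejecting an isolated vertex was the only way a non-greedy algorithm could ``pay'' for being clever on the connected part, so flooding the instance with isolated vertices neutralized that advantage. The task is therefore to design an AOC problem where a competing algorithm's advantage over greedy does \emph{not} require it to ever reject an isolated request, so that adding isolated requests cannot close the gap.

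The concrete construction I would pursue is a problem in which the disadvantage of greed is ``self-contained'' in the non-isolated part of the instance and is multiplicative or additive in a way that isolated requests cannot dilute. A clean way to force this is to make the feasibility constraint such that greed is \emph{compelled} to accept an early request that blocks a large number of later requests, while the optimal online strategy is to reject that one request and collect many more. For a specific candidate, I would take a problem on a ground set partitioned into ``triggers'' and ``payloads'': accepting a trigger renders all subsequent payloads in its block infeasible, and greed (which accepts anything it can) grabs the first trigger and loses the whole block, whereas a smarter algorithm rejects the trigger and accepts the entire payload. The point is to verify the five AOC axioms for this problem — in particular that feasibility is downward closed (a subset of a maximum-profit feasible solution is feasible) and that a feasible solution always exists (e.g. the empty set, or the all-reject set) — and then to check that isolated requests in the precise sense of the statement simply add the same constant to both greedy and optimal, leaving the gap intact.

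The steps, in order, would be: (1) fix the ground set, the family of feasible sets, and confirm it satisfies Definition~\ref{sgeasydef}, paying special attention to the subset-closure property which is what makes ``isolated'' requests well defined; (2) exhibit the natural greedy algorithm and a concrete adversarial request ordering on which it performs poorly, computing its profit; (3) exhibit a competing online algorithm that does strictly better on that same ordering \emph{without ever rejecting an isolated request}, computing its profit; (4) argue that augmenting any instance by $m$ isolated requests increases both the greedy profit and the competing algorithm's profit by exactly $m$ (since both accept all isolated requests and the isolated requests interact with nothing), so the additive gap is preserved for every $m$; and (5) conclude that no finite amount of isolated padding makes greed optimal.

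The main obstacle I expect is step (3)–(4): I must ensure that the competing algorithm's edge is genuinely \emph{orthogonal} to the isolated requests, i.e.\ that it gains its advantage purely inside the non-isolated core and can still accept every isolated request greedily. In the \IS case the obstruction was exactly the opposite — the clever algorithm \emph{had} to sacrifice isolated vertices — so the heart of this theorem is engineering a problem where that tradeoff disappears. Concretely, I would need to verify that accepting an isolated request never forces the competing algorithm into a future rejection it would otherwise avoid, which follows formally from the AOC axiom that isolated requests can be added to any feasible set; but I would double-check that the adversary cannot exploit the algorithm's commitment to accept isolated requests to degrade its performance on the core, which is where a careless construction could fail.
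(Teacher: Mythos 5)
There is a genuine gap, and it is precisely the issue your own last paragraph gestures at but does not resolve. Your trigger/payload construction, read in the natural information model for AOC problems (a request reveals only the forbidden sets it forms with earlier requests, as in the paper's \MOS framework), is exactly \IS on a star: the trigger is the center, the payloads are the leaves. In that model a trigger requested before its payloads arrives with an \emph{empty} constraint list and is therefore indistinguishable from an isolated request; any algorithm that rejects it can be punished by the adversary retroactively declaring it isolated. This is not a minor technicality --- it is the content of Lemma~\ref{conservative} and Theorems~\ref{isgod} and~\ref{mos}, which prove that for \emph{every} problem of this constraint-revelation type (\MOS, which subsumes your sketch), adding enough isolated elements \emph{does} make greedy optimal. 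So the naive implementation of your plan is provably doomed; the theorem you are trying to prove is false within the class of problems your construction lives in. To make any construction work, the competing algorithm's rejections must always be justified by \emph{revealed} structure, never by guessing, which forces the problem to reveal ``warning'' information that is not itself a constraint among prior requests. This is exactly what the paper's choice of \F (maximal forest in the vertex-arrival model) achieves: the forbidden sets are cycles, which need three vertices, but \emph{edges} are revealed earlier, so the competing algorithm can reject $y$ upon seeing the edge $(x,y)$ --- a request that is demonstrably not isolated --- before any cycle exists. Your proposal never discusses what information requests carry, and without that ingredient the construction collapses.

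A secondary, fixable gap: your step (3) only exhibits an algorithm that beats greedy \emph{on greedy's bad ordering}. Optimality here is defined via the worst case over all orderings, so you must show the competing algorithm's \emph{worst-case} score on the padded instance exceeds greedy's worst-case score; the paper does this with a three-case analysis showing its algorithm accepts at least $k+n-1$ vertices on \emph{every} ordering of $G_n$, against greedy's worst case of $k+2$. Your step (4) (isolated padding adds exactly $m$ to both sides) is fine once both algorithms provably accept all isolated requests, but it does not substitute for the missing worst-case analysis of the competing algorithm on the core.
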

\begin{proof}
Consider the problem \F, in the vertex arrival model, where the goal
is to accept as large a set, $S$, of vertices, as possible, under the 
restriction that $S$ may not contain a cycle. Consider the following
graph, $G'_n=(V,E)$, where 
\begin{align*}
V=&\{ x,y,v_1,v_2,\ldots,v_n \} \text{ and} \\
E=& \{ (x,y)\} \cup \{ (x,v_i),(y,v_i)\mid 1\leq i\leq n\}.
\end{align*}
Figure~\ref{G5} shows $G'_5$.

\begin{figure}[h]
\centering
\begin{tikzpicture}[-,>=stealth',auto,node distance=1cm,
  thick,
  main node/.style={circle,fill=blue!20,minimum size=0.75cm,draw,font=\sffamily\Large\bfseries},
  uncolored/.style={circle,fill=white!20,minimum size=0.75cm,draw,font=\sffamily\Large\bfseries},
  smallnode/.style={circle,fill=blue!20,minimum size=0.75cm,draw,font=\sffamily\Large\bfseries}]

  \node(1)[main node] {$x$};
  \node(x)[below = 1.5 cm of 1] {};
  \node(2)[main node] [right = 0.1 cm of x] {$v_1$};
  \node(3)[main node] [right = 0.5 cm of 2] {$v_2$};
  \node(4)[main node] [right = 0.6 cm of 3] {$v_3$};
  \node(5)[main node] [right = 1 cm of 4] {$v_4$};
  \node(6)[main node] [right = 1.5 cm of 5] {$v_5$};
  
  \node(7)[main node] [below = 1.5 cm of x] {$y$};

  \path[every node/.style={font=\sffamily\small}]
	(1) edge (2)
	(1) edge (3)
	(1) edge (4)
	(1) edge (5)
	(1) edge (6)
	
	(7) edge (2)
	(7) edge (3)
	(7) edge (4)
	(7) edge (5)
	(7) edge (6)
	;
  \path[every node/.style={font=\sffamily\small}]
	(1) edge (7) [bend left]
	;
\end{tikzpicture}
\caption{The graph $G'_5$.} \label{G5}
\end{figure}
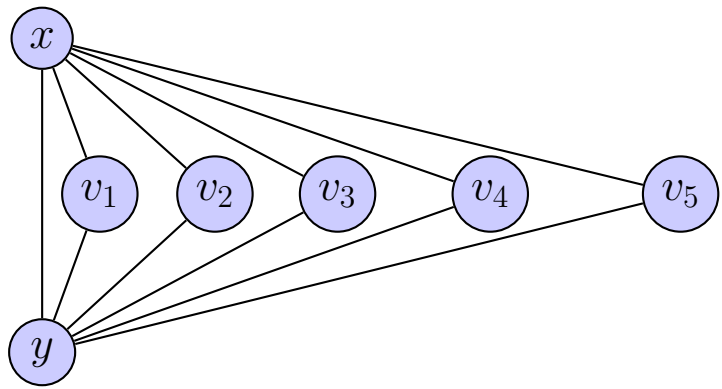

We let $W=\{v_1,v_2, \ldots, v_n\}$.
Consider $G_n$ which is $G'_n$ with an arbitrary number $k$ of isolated
vertices added. Let \GF be the natural greedy algorithm for \F,
which accepts any vertex which does not create a cycle. If the
adversary requests $x$ and $y$ before any vertex in $W$, \GF
cannot accept any vertex in $W$, so $\GF(G'_n)=k+2$. But there
is another algorithm, \alg, which accepts more. 
The algorithm \alg accepts a vertex $v$ if
\begin{itemize}
\item $v$ has degree at most two and
\item all neighbors of $v$ have degree at least three (degree two before the current request).
\end{itemize}
We claim that \alg cannot accept both $x$ and $y$.
Assume $x$ was requested before $y$ and accepted.
Now, $y$ can only be accepted if $x$ already has two other neighbors, $v_i$ and $v_j$,
when $y$ is requested. However, this means that $y$ will also have these two neighbors and
\alg will reject it because its degree is at least three. The argument is symmetric if $y$ is requested before $x$.

We now show that \alg accepts at least $k+n-1$ vertices.
The $k$ isolated vertices will be accepted by \alg regardless of when they are requested.

Now assume $x$ is requested before all vertices in $W$ and before $y$.
In this case, $x$ is accepted. We have shown that $y$ will be rejected when it is requested.
At most two vertices from $W$ can be rejected. When at least two vertices from $W$ have already been requested
and a new vertex $v_i$ is requested, it holds that $v_i$ has degree at most two and that all neighbors of $v_i$ ($x$ and possibly $y$)
have degree at least three. Thus, $v_i$ is accepted. In total, at least $k+1+n-2=k+n-1$ vertices are accepted. 
A symmetric situation holds if $y$ is presented before $x$ and all vertices in $W$.

We now consider the case where a vertex, $v_i \in W$ is requested before $x$ and $y$.
In this case, $v_i$ is accepted. When $x$ and $y$ are requested, they will be rejected since they have a neighbor ($v_i$) whose
degree it at most two. At most one vertex in $W$ can be rejected, since after that, two vertices in $W$ have already been requested 
($v_i$ was requested first). When another vertex $v_j$ is requested, it holds that its possible neighbors ($x$ and $y$) have degree at least three.
In total, at least $k+n-1$ vertices are accepted. 

Hence, the greedy algorithm is not optimal for $G_n$ with
$n \geq 4$. \qed
\end{proof}

We now consider another class of problems where the property
does hold. This is a generalization of \IS.
We consider the problem \MOS.
In this problem, an instance consists of a base set $E$ and a set of forbidden subsets $F \subseteq P(E)$.
The forbidden subsets have the property that any superset of a
forbidden subset is also forbidden.
We let $x=\langle x_1, \ldots, x_{|E|} \rangle$ denote the request sequence.
There is a bijective function, $f$, mapping the $x_i$'s to $E$.
For a set $S=\{x_i,x_j,x_h,\ldots\}$, we let $f(S)$ denote $\{f(x_i),f(x_j),f(x_h),\ldots\}$.
This function $f$ is not known to the algorithm.
In request $i$, the algorithm receives request $x_i$.
The request contains a list of all minimal subsets $A \subseteq \{x_1, \ldots, x_{i-1}\}$
such that $f(A) \cup \{f(x_i)\} \in F$ (note that this list may be empty).
The algorithm must reject or accept $x_i$.
The produced solution is said to be feasible if it does not contain any subsets from $F$. 
The score of a feasible solution is the number of accepted elements.
The score of an infeasible solution is $-\infty$.
Note that if all minimal sets in $F$ have size two, this is equivalent to \IS.

In \MOS, an isolated element is an element from $E$ which is not in any sets of $F$.
Note that such an element can be added to any solution.
We let $s(E,F)$ denote a smallest $S \subseteq E$ such that adding any element to $S$
results in a set which contains a forbidden subset.

The greedy algorithm, \GMOS, is the algorithm which always accepts a request if the resulting solution is feasible.

For an algorithm, \alg, we let $\alg(E,F)$ be the smallest number such that there
exist an ordering of $E$ which causes \alg to accept at most $\alg(E,F)$ elements (using $F$ as forbidden subsets).
We let $\oms(E,F)$ be the largest number such that there exists an algorithm with $\alg(E,F)=\oms(E,F)$.
\begin{theorem}\label{mos}
Let $(E,F)$ be a \MOS instance, and let $E'$ be $E$ with the
isolated elements removed. Let $k$ denote the number of isolated elements.
If $k+|s(E',F)| \geq \oms(E',F)$, then
\[
	\GMOS(E,F)=\oms(E,F).
\]
\end{theorem}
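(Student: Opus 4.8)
The plan is to mirror the proof of Theorem~\ref{isgod} exactly, since \MOS is the natural generalization of \IS obtained by replacing edges (forbidden pairs) with arbitrary minimal forbidden subsets, and all the ingredients used in the independent set argument have direct analogues here. The role of ``isolated vertices'' is played by isolated elements, the role of an inclusion-maximal independent set in $G'$ is played by $s(E',F)$ (a smallest set to which nothing can be added without creating a forbidden subset), and the role of $\ois(G')$ is played by $\oms(E',F)$. The hypothesis $k+|s(E',F)| \geq \oms(E',F)$ is precisely the Freckle condition $k+|\lille| \geq \ois(G')$ transported to this setting.

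The first step is to establish the lower bound $\GMOS(E,F) \geq k+|s(E',F)|$. The greedy algorithm accepts every isolated element, since adding an isolated element never creates a forbidden subset, contributing $k$. On the non-isolated elements, \GMOS accepts a set to which nothing further can be added while remaining feasible, i.e.\ an inclusion-maximal feasible set; over the worst ordering this has size exactly $|s(E',F)|$ by definition of $s(E',F)$ as a smallest such set. Hence $\GMOS(E,F) = k + |s(E',F)|$. For this I would first need the analogue of Lemma~\ref{conservative}: a \emph{conservative adversary lemma} stating that for \MOS there is an adversary, never making a ``pointless'' request (a request $x_i$ such that some already-accepted subset $A$ satisfies $f(A)\cup\{f(x_i)\}\in F$, forcing rejection) unless only such requests remain, which holds every algorithm to at most $\oms(E',F)$ on $E'$. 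The proof of this lemma is the same simulation argument as in Lemma~\ref{conservative}: an algorithm $\alg'$ maintains a virtual instance consisting of the non-pointless requests, rejects pointless requests outright, and copies the behavior of the assumed good algorithm \alg on the virtual instance, so that $\alg'(E',F) \geq \alg$ against conservative adversaries, contradicting the definition of $\oms(E',F)$.

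The second step is the matching upper bound: an adversary strategy forcing any algorithm \alg to accept at most $k+|s(E',F)|$ elements. The adversary presents isolated elements until \alg either accepts $|s(E',F)|$ of them or rejects $k$ of them. If \alg accepts $|s(E',F)|$ isolated elements, the adversary declares these to be exactly the elements of $s(E',F)$ (using the freedom in the bijection $f$, which is unknown to the algorithm); then no element of $E'$ can be accepted without creating a forbidden subset, so the total is at most $k + |s(E',F)|$. If instead \alg rejects $k$ isolated elements, the adversary declares those rejected elements to be the $k$ genuine isolated elements; the algorithm has then accepted at most $|s(E',F)|-1$ isolated elements, and the adversary now plays the optimal conservative strategy on $E'$ guaranteed by the conservative lemma, forcing \alg to accept an $F$-feasible set of size at most $\oms(E',F) \leq k+|s(E',F)| = \GMOS(E,F)$ by hypothesis. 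Combining, $\alg(E,F) \leq k+|s(E',F)|$ for every \alg, so $\oms(E,F) \leq k+|s(E',F)| = \GMOS(E,F)$; together with $\GMOS(E,F) \leq \oms(E,F)$ (which is immediate since \GMOS is one particular algorithm) this yields the claimed equality.

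The main obstacle I anticipate is the bookkeeping in the conservative adversary lemma and the consistency of the declared bijection $f$. In the pure independent set case ``pointlessness'' depends only on a single accepted neighbor, whereas here it depends on whole accepted \emph{subsets} $A$ with $f(A)\cup\{f(x_i)\}\in F$, so the virtual-instance simulation must carefully track which minimal forbidden subsets are revealed with each request and ensure that deleting pointless requests does not inadvertently change which subsets are forbidden among the surviving elements. The monotonicity assumption on $F$ (every superset of a forbidden set is forbidden) is what makes this go through, since it guarantees that rejecting and discarding pointless requests is always forced and never helpful, exactly as rejecting a vertex adjacent to an accepted vertex is forced in \IS. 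The remaining delicate point is verifying that when the adversary ``decides'' the identities of the accepted or rejected isolated elements, this choice of $f$ is consistent with everything the algorithm has observed so far; this is straightforward because isolated elements carry empty lists of minimal forbidden subsets and are therefore indistinguishable to the algorithm, leaving the adversary free to assign them after the fact.
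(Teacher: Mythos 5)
Your proposal is correct and matches the paper's own proof essentially step for step: the paper likewise notes $\GMOS$ achieves $k+|s(E',F)|$, has the adversary issue indistinguishable requests with empty forbidden lists until the algorithm accepts $|s(E',F)|$ or rejects $k$, resolves the first case by declaring the accepted elements to be $s(E',F)$, and resolves the second via the same \MOS analogue of Lemma~\ref{conservative} (with the same notion of pointless request) to cap the algorithm at $\oms(E',F) \leq k+|s(E',F)|$. The subtleties you flag (indistinguishability of the initial requests, monotonicity of $F$ making pointless requests forced) are exactly the points the paper relies on, so no gap remains.
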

\begin{proof}
This proof is similar to that of Theorem~\ref {isgod}.
First note that \GMOS accepts the $k$ isolated elements and at least $|s(E',F)|$ elements from $E'$.
For any algorithm the adversary can start by requesting
elements, each with an empty list of forbidden sets it is already contained
in. It continues until
the algorithm has either accepted $|s(E',F)|$ elements or rejected $k$.
The key argument is that the algorithm cannot distinguish between these
initial elements. 

If the algorithm accepts at least $|s(E',F)|$ elements,
the adversary can decide that they were exactly those in $s(E',F)$, which \GMOS  also accepts. In this case, the algorithm cannot
accept more than $k+|s(E',F)|$ elements in total. 

If the algorithm rejects $k$ elements, we need a result similar to that of Lemma~\ref{conservative}.
For \MOS, a pointless request is one, which reveals a forbidden set which contains only elements that have been accepted.
Accepting a pointless request would result in an infeasible solution. The same argument as in the proof for Lemma~\ref{conservative}
shows that an adversary loses no power by being conservative. Thus, when $k$ elements have been rejected (and up to $|s(E',F)|-1$ have been accepted),
the adversary has a strategy for the remaining elements which ensures that the algorithm accepts at most $\oms(E',F)\leq k+|s(E',F)| $ elements. \qed
\end{proof}

This problem is quite flexible. As we have mentioned, it can model independent set, but it could also model matroid intersection problems such as bipartite
matching (though, even with more than two matroids). In this case,  
the forbidden sets, $F$, are the dependent sets in the union of the matroids. 

\section{Implications for Worst Case Performance Measures}
\label{measures}
Do the results from the previous section mean that \GI is a good algorithm for \IS if the input graph is known to be a Freckle Graph? 
The answer to this depends on how the performance of online algorithms is measured. In general, the answer is yes
if a measure that only considers the worst case is used.

The most commonly used performance measure for online algorithms is \emph{competitive analysis} \cite{Sleator}.
For maximization problems, an algorithm, \alg, is said to be $c$-competitive if there exists a constant, $b$,
such that for any input sequence, $I$, $\opt(I) \leq c\alg(I)+b$ where $\opt(I)$ is the score of the optimal offline algorithm.
For minimization problems, we require that $\alg(I) \leq c\opt(I)+b$. 
The competitive ratio of \alg is $\inf \; \{c : \alg \text{ is $c$-competitive} \}$. (Note that these ratios are always at least $1$.)
For \emph{strict competitive analysis}, the definition is the same, except there is no additive constant.

Another measure is \emph{on-line competitive analysis} \cite{GKL97},
which was introduced for online graph coloring.
The definition is the same as for competitive analysis except that $\opt(I)$ is replaced by $\opton(I)$, which is the score of the
best online algorithm that knows the requests in $I$ but not their ordering. For graph problems, this means that the vertex-arrival
model is used, as in this paper. The algorithm is allowed to know the final graph.

\begin{corollary}
For \IS on Freckle Graphs, no algorithm has a smaller competitive ratio, strict competitive ratio, or on-line competitive ratio than \GI.
\end{corollary}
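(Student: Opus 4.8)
The plan is to reduce all three statements to the single fact, established in Theorem~\ref{isgod} and Corollary~\ref{freckle;greedy}, that on every Freckle Graph $G$ the greedy algorithm is online optimal, i.e.\ $\GI(G)=\ois(G)\ge\alg(G)$ for every online algorithm \alg. The three measures differ only in the quantity against which an algorithm's score is compared, so I would first record how those comparison quantities behave as the ordering varies. For an input $I$ whose underlying graph is the Freckle Graph $G$, the offline optimum $\opt(I)$ is simply the offline independence number of $G$ and hence does not depend on the ordering; likewise, by the definition of on-line competitive analysis, $\opton(I)$ is the online-optimal value $\ois(G)$, again a per-graph quantity independent of the ordering. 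Consequently, for a fixed $G$ the worst ratio over all orderings is attained at an ordering that minimizes the algorithm's own score, that is, at its worst-case value $\alg(G)$.

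For the competitive ratio and the strict competitive ratio I would then argue by a domination transfer. Suppose \alg is $c$-competitive on Freckle Graphs with additive constant $b$ (taking $b=0$ in the strict case). By the previous paragraph this is equivalent to requiring $\opt(I)\le c\,\alg(G)+b$ for every Freckle Graph $G$. Since $\GI(G)=\ois(G)\ge\alg(G)$, we obtain $c\,\GI(G)+b\ge c\,\alg(G)+b\ge\opt(I)$, so \GI is itself $c$-competitive with the same constant $b$. Thus every $c$ witnessing $c$-competitiveness of \alg also witnesses it for \GI; taking the infimum over such $c$ shows that the (strict) competitive ratio of \GI is no larger than that of \alg.

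The on-line competitive ratio is then almost immediate: here the comparison quantity is $\opton(I)=\ois(G)=\GI(G)$, so on every Freckle Graph the worst ratio achieved by \GI equals $\ois(G)/\GI(G)=1$, already with additive constant $0$. Since on-line competitive ratios are always at least $1$, \GI attains the smallest value possible, and hence no algorithm can beat it.

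The only place where care is needed --- and thus the main obstacle --- is pinning down the comparison quantities $\opt(I)$ and $\opton(I)$ as functions of the underlying graph alone, independent of the request ordering; once this is fixed, each measure collapses onto the per-graph inequality $\GI(G)\ge\alg(G)$ and the arguments are short. A secondary point to treat explicitly is the additive constant $b$, which must be shown to carry over unchanged from \alg to \GI.
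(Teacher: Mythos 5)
Your proposal is correct and takes essentially the same approach as the paper: the paper's (very terse) proof transfers $c$-competitiveness from an arbitrary algorithm \alg to \GI via Theorem~\ref{isgod}, exactly as your domination-transfer argument does, with your preliminary observations (that $\opt(I)$ and $\opton(I)$ depend only on the underlying graph, and that the additive constant $b$ carries over) left implicit there. Your direct argument that \GI attains on-line competitive ratio $1$ on Freckle Graphs is a minor, equally valid variant of the paper's uniform transfer argument for that third measure.
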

\begin{proof}
Let \alg be a $c$-competitive algorithm for some $c$. Theorem~\ref{isgod} implies that \GI is also $c$-competitive. This argument also holds for the strict competitive ratio and the
on-line competitive ratio. \qed
\end{proof}

\begin{corollary}
For \VC on Freckle Graphs, no algorithm has a smaller competitive ratio, strict competitive ratio, or on-line competitive ratio than \GV.
\end{corollary}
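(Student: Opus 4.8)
The plan is to mirror the proof of the preceding corollary for \IS, replacing Theorem~\ref{isgod} by its vertex-cover analogue, Corollary~\ref{vcgod}. The one fact that drives the whole argument is that for every Freckle Graph $G$ and every \VC algorithm \alg one has $\GV(G) \le \alg(G)$; since \VC is a minimization problem, this says that on its own worst ordering \GV never accepts more vertices than \alg does on \alg's worst ordering. I would also record the companion fact, proved just above, that $\GV(G) = \ovc(G)$, so that \GV is itself an optimal online algorithm on Freckle Graphs.

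First I would dispatch the (standard) competitive ratio. Assume \alg is $c$-competitive, so some constant $b$ satisfies $\alg(I) \le c\,\opt(I) + b$ for every request sequence $I$. For a fixed Freckle Graph $G$ the offline optimum is independent of the ordering, so evaluating this inequality on the worst ordering for \alg gives $\alg(G) \le c\,\opt(G) + b$. Applying Corollary~\ref{vcgod} yields $\GV(G) \le \alg(G) \le c\,\opt(G) + b$, and since $\GV(\phi(G)) \le \GV(G)$ for every ordering $\phi$, the inequality $\GV(\phi(G)) \le c\,\opt(\phi(G)) + b$ holds for all $\phi$; hence \GV is $c$-competitive. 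Taking $b = 0$ gives verbatim the same conclusion for the strict competitive ratio, and passing to the infimum over admissible $c$ shows that neither ratio of \GV can exceed that of \alg.

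For the on-line competitive ratio the benchmark $\opt$ is replaced by $\opton$, the value achievable by the best online algorithm that knows the graph but not the ordering --- the analogue, in the spirit of the online chromatic number, of the graph invariant $\ovc(G)$. Reading $\opton$ this way, the argument of the previous paragraph transfers unchanged: if \alg is $c$-competitive against $\opton$, then $\GV(G) \le \alg(G) \le c\,\ovc(G) + b$, so \GV is $c$-competitive against $\opton$ as well. In fact, because $\GV(G) = \ovc(G)$ on every Freckle Graph, \GV attains the online optimum exactly, so its on-line competitive ratio equals $1$, the least value any ratio can take.

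The step I expect to require the most care is purely the bookkeeping around the quantifier over orderings, since $\GV(G) \le \alg(G)$ compares only worst-case scores and not the scores on a common ordering. The resolution is that for the offline benchmark $\opt(\phi(G)) = \opt(G)$ is ordering-independent, so the binding constraint on \alg is forced by \alg's own worst ordering, after which $\GV(G) \le \alg(G)$ together with the monotonicity $\GV(\phi(G)) \le \GV(G)$ closes the argument; for $\opton$ one must additionally commit to the graph-invariant reading of the online optimum, which is exactly what makes the online-optimality identity $\GV(G) = \ovc(G)$ do its work.
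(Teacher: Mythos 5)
Your proposal is correct and follows essentially the same route as the paper: the paper proves the \IS version by noting that Theorem~\ref{isgod} lets any $c$-competitiveness guarantee for \alg transfer to \GI, and leaves the \VC corollary as the verbatim analogue via Corollary~\ref{vcgod}, which is exactly what you do. Your extra bookkeeping (ordering-independence of $\opt$, monotonicity $\GV(\phi(G)) \le \GV(G)$, and the graph-invariant reading of $\opton$) just makes explicit what the paper's one-line argument leaves implicit.
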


\begin{corollary}
For \DS on the class of graphs with at least one isolated vertex, no algorithm has a smaller competitive ratio, strict competitive ratio, or on-line competitive ratio than \GD.
\end{corollary}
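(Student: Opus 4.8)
The plan is to reuse, essentially verbatim, the argument behind the corollary for \IS, now invoking Theorem~\ref{dsgod} in place of Theorem~\ref{isgod}. Theorem~\ref{dsgod} already establishes that \GD is online optimal on every graph with at least one isolated vertex, in the sense that $\GD(G) \le \alg(G)$ for every algorithm \alg for \DS. The remaining work is purely to translate this worst-ordering optimality into a statement about the three competitive ratios, so I would fix a graph $G$ in the stated class together with an arbitrary $c$-competitive algorithm \alg, and show that \GD is itself $c$-competitive with the same constants.

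First I would record that \DS is a minimization problem, so by definition $\alg(G) = \max_\phi \alg(\phi(G))$ is the largest score \alg attains over all orderings of $G$, and likewise $\GD(\psi(G)) \le \GD(G)$ for every ordering $\psi$. The key observation is that the offline optimum $\opt(\phi(G))$ depends only on $G$, not on the arrival order $\phi$ (it is the size of a minimum dominating set of $G$), and the same is true of $\opton(\phi(G))$, which is the best value achievable by an online algorithm that knows $G$. This ordering-independence is what lets me pass between the per-ordering competitiveness inequality and the worst-ordering quantities that appear in Theorem~\ref{dsgod}.

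The core chain then reads: for every ordering $\psi$ of $G$, $\GD(\psi(G)) \le \GD(G) \le \alg(G) = \max_\phi \alg(\phi(G)) \le c\,\opt(\psi(G)) + b$, where the first inequality is the definition of $\GD(G)$ as a maximum, the second is Theorem~\ref{dsgod}, and the last uses that \alg is $c$-competitive together with the fact that $\opt(\phi(G))$ is constant in $\phi$. Since this holds for every $\psi$, \GD is $c$-competitive with constants $c$ and $b$, so it cannot have a larger competitive ratio than \alg. The strict competitive case is the specialization $b = 0$, and the on-line competitive case follows by replacing $\opt$ with $\opton$ throughout, using that $\opton(\phi(G))$ is also independent of $\phi$.

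The step I expect to require the most care is precisely the bridge between two \emph{a priori} different notions of value: the competitive ratio is defined request-sequence by request-sequence, whereas Theorem~\ref{dsgod} compares the worst-ordering scores $\GD(G)$ and $\alg(G)$. The whole argument hinges on the ordering-independence of $\opt$ (and $\opton$), so that taking $\max_\phi$ on the left of the competitiveness inequality leaves the right-hand side unchanged; granting this, every remaining inequality is immediate. I would also note in passing that any $c$-competitive algorithm must always produce a feasible dominating set (otherwise its score is $\infty$), and that \GD does so by construction, so the comparison is never vacuous.
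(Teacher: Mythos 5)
Your proposal is correct and takes essentially the same approach as the paper: the paper proves the analogous \IS corollary by observing that Theorem~\ref{isgod} lets \GI inherit $c$-competitiveness from any $c$-competitive \alg, and then states the \DS version without further proof as the identical argument with Theorem~\ref{dsgod} in place of Theorem~\ref{isgod}. Your write-up merely makes explicit the step the paper leaves implicit, namely that $\opt$ (and $\opton$) are independent of the request ordering, which is what lets the worst-ordering comparison $\GD(G) \le \alg(G)$ translate into the per-sequence competitiveness inequality.
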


Similar results hold for relative worst order analysis~\cite{BF07}. According to
relative worst order analysis, for minimization
problems in this graph model, one algorithm, $A$, is at least as good
as another algorithm, $B$, on a graph class, if for all graphs $G$ in
the class, $A(G) \leq B(G)$. The inequality is reversed for
maximization problems. 
It follows from the definitions
that if an algorithm is optimal with
respect to on-line competitive analysis, it is also optimal
with respect to relative worst-order analysis.
This was observed in~\cite{BFM}.
Thus, the above results show that the three greedy algorithms in the corollaries above
are also optimal on Freckle Graphs, under relative worst order analysis.

\section{A Subclass of Freckle Graphs Where Greedy Is Not Optimal (Under Some Non-Worst Case Measures)} \label{bijective}
Although these greedy algorithms are optimal with respect to
some worst case measures, this does not mean that these greedy algorithms
are always the best choice for \emph{all} Freckle Graphs.
There is a subclass of Freckle Graphs where another algorithm is objectively better than \GI,
and bijective analysis and average analysis~\cite{Angelopoulos07} reflect this.

\begin{theorem} \label{agi}
There exists an infinite class of Freckle Graphs $\family=\{G_n \; | n \geq 2\}$ and 
an algorithm \AGI such that for all $n \geq 2$ the following holds:
\begin{align*}
 \forall \phi& \; \AGI(\phi(G_n)) \geq \GI(\phi(G_n))\\
 \exists \phi& \; \AGI(\phi(G_n)) > \GI(\phi(G_n))
\end{align*}
\end{theorem}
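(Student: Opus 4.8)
I need to construct an infinite family of Freckle Graphs where a tailored algorithm weakly dominates \GI on every ordering, and strictly beats it on at least one ordering. The theorem is about \emph{bijective}/\emph{average} analysis, so I cannot just exhibit a single bad worst-case ordering (Theorem~\ref{isgod} already rules that out); I need an algorithm that matches \GI's independent-set size on every ordering while doing strictly better on some. Let me think about what structure gives this.

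Let me think carefully about the constraints here.

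The plan is to exhibit a single fixed ``obstruction gadget'' $H$ and to let $G_n$ be $H$ together with $n$ added isolated vertices, so that $\family=\{G_n\mid n\ge 2\}$ is an infinite family of Freckle Graphs. The gadget $H$ will contain one high-degree vertex $m$ whose greedy acceptance is a mistake; but \emph{unlike} the center of a star, this mistake will be \emph{detectable}. I would surround $m$ with private ``announcer'' vertices that are forced to be rejected before $m$ is decided, so that whenever \GI is about to err, $m$ already displays two rejected neighbors, while no other vertex ever can. Concretely I take $H$ on vertices $\{m,u,w,r_1,r_2,k_1,k_2\}$ with edges $mu,mw,mr_1,mr_2,r_1k_1,r_2k_2$: here $u,w$ are the ``payoff'' leaves that rejecting $m$ frees, each $r_i$ is an announcer adjacent to $m$, and each $k_i$ is a degree-one ``killer'' that forces $r_i$ to be rejected. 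The algorithm \AGI is \GI made slightly less greedy: it accepts a requested vertex $v$ exactly when $v$ has no already-accepted neighbor \emph{and} at most one already-revealed neighbor, and otherwise rejects. Thus \AGI agrees with \GI on every request except one where $v$ has no accepted neighbor but at least two revealed (hence rejected) neighbors; there \GI accepts while \AGI rejects. Since \AGI accepts only vertices with no accepted neighbor, it always outputs an independent set and is a legitimate \IS algorithm.

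The core of the argument is to show that on $G_n$ this single point of disagreement occurs only at $m$, and only beneficially. First I would note that every vertex of degree at most one (namely $u,w,k_1,k_2$ and all isolated vertices) is treated identically by the two algorithms. For the degree-two announcers $r_i$ I would argue that the deviation can never fire: the only way $r_i$ could face two revealed rejected neighbors is for both $m$ and $k_i$ to have been rejected before $r_i$ is decided, but $k_i$'s unique neighbor is $r_i$, so while $r_i$ is still undecided $k_i$ has no accepted neighbor and is therefore accepted, not rejected. Consequently $r_i$ always has an accepted neighbor at the moment a trigger would occur and is rejected by the ordinary greedy clause, with no deviation. Hence the deviation fires only at $m$. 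When it fires, $m$'s two rejected neighbors must be exactly $r_1,r_2$ (since $u,w$ have $m$ as their only neighbor and $m$ is not yet accepted, so $u,w$ cannot yet be rejected), and $u,w$ are still unrevealed; rejecting $m$ then lets \AGI accept both $u$ and $w$ afterwards, trading the one vertex $m$ for the two vertices $u,w$.

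A short bookkeeping check completes the pointwise comparison. On a non-trigger ordering \AGI plays exactly greedily, so $\AGI(\phi(G_n))=\GI(\phi(G_n))$. On a trigger ordering the two algorithms agree on every request before $m$, so their accepted sets differ only in that $m$ belongs to \GI's; thereafter the only vertices whose decision depends on $m$'s status are $u$ and $w$, which \AGI accepts and \GI rejects, while all remaining requests (the isolated vertices) are accepted by both. Hence $\AGI(\phi(G_n))=\GI(\phi(G_n))+1$ on every trigger ordering. This yields $\AGI(\phi(G_n))\ge\GI(\phi(G_n))$ for all $\phi$, with strict inequality on any ordering presenting $k_1,r_1,k_2,r_2,m$ in that relative order before $u$ and $w$, establishing both required inequalities.

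It remains to confirm the family is as claimed. Both algorithms accept all $n$ isolated vertices, so these contribute equally and the per-ordering comparison reduces to $H$, exactly as used above. For the Freckle property I would exhibit a smallest maximal independent set of $H$, namely $\{m,k_1,k_2\}$, giving $|s(H)|=3$, and bound the online independence number by $\ois(H)\le\alpha(H)=4$; the Freckle inequality $n+|s(H)|\ge\ois(H)$ then holds already for $n\ge 1$, so every $G_n$ with $n\ge 2$ is a Freckle Graph. The step I expect to be the main obstacle is precisely this safety-and-unambiguity argument for the deviation, that is, showing that no ``good'' vertex is ever mistaken for $m$. This is exactly where a star construction fails, since there the center is indistinguishable from a leaf at decision time, and it is the reason for the announcer--killer pairs: they force $m$'s high degree to be revealed through rejected neighbors before $m$ must be decided, while structurally preventing any degree-two vertex from ever exhibiting two rejected neighbors.
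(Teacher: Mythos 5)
Your proof is correct. Its core coincides with the paper's: your algorithm \AGI (play greedily, but reject any vertex that already has two revealed neighbors when presented) is exactly the paper's \AGI, your gadget $H$ is isomorphic to the paper's graph $G_2$ (your $k_i,r_i,m,\{u,w\}$ are the paper's $x_i,y_i,z,\{u_1,u_2\}$), and the first-disagreement analysis showing that the deviation can only fire at the center, after which the payoff leaves are recovered, is the same argument. Where you genuinely diverge is in how the infinite family is produced and how Freckleness is certified. The paper grows the gadget itself: its $G_n$ has $n$ announcer pairs $(x_i,y_i)$ and $n$ payoff vertices $u_i$, contains no isolated vertices at all, and \AGI beats \GI by $n-1$ on trigger orderings; the price is that verifying the Freckle condition requires a genuine adversary argument (present $n$ indistinguishable isolated-looking vertices first, then resolve their identities adversarially) to show $\ois(G_n)\le n+1$. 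You instead freeze the gadget and pad with $n$ isolated vertices, which makes the Freckle check essentially free ($\ois(H)\le\alpha(H)=4\le n+3$, and $|s(H)|=3$ via $\{m,k_1,k_2\}$) but yields only a constant advantage of $1$ and a family whose Freckle property comes entirely from the padding. Both routes prove the theorem as stated; the paper's version additionally demonstrates that greedy's suboptimality under bijective analysis occurs even on connected Freckle graphs, i.e., it is not an artifact of isolated vertices, while yours buys a shorter, computation-free verification of the Freckle property. One cosmetic caveat: your phrase that degree-at-most-one vertices are ``treated identically'' by the two algorithms should be read as ``the extra rejection clause never applies to them,'' not that their outcomes agree---indeed $u$ and $w$ are decided differently after a trigger---but your subsequent bookkeeping handles this correctly.
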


\begin{proof}
Consider the graph $G_n=(V_n,E_n)$, where
\begin{align*}
V_n=&\{ x_1,x_2,\ldots,x_n,y_1,y_2,\ldots,y_n,z,u_1,u_2,\ldots,u_n\} \\
E_n=&\{ (x_i,y_i),(y_i,z),(z,u_i) \mid 1 \leq i \leq n\}.
\end{align*}
Figure~\ref{G4} shows the graph $G_4$.

\begin{figure}[h]
\centering
\begin{tikzpicture}[-,>=stealth',auto,node distance=1cm,
  thick,
  main node/.style={circle,fill=blue!20,minimum size=0.75cm,draw,font=\sffamily\Large\bfseries},
  uncolored/.style={circle,fill=white!20,minimum size=0.75cm,draw,font=\sffamily\Large\bfseries},
  smallnode/.style={circle,fill=blue!20,minimum size=0.75cm,draw,font=\sffamily\Large\bfseries}]

  \node(1)[main node] {$x_1$};
  \node(2)[main node] [right = 1 cm of 1] {$x_2$};
  \node(3)[main node] [right = 1 cm of 2] {$x_3$};
  \node(4)[main node] [right = 1 cm of 3] {$x_4$};
  
  \node(5)[main node] [below = 1 cm of 1] {$y_1$};
  \node(6)[main node] [right = 1 cm of 5] {$y_2$};
  \node(7)[main node] [right = 1 cm of 6] {$y_3$};
  \node(8)[main node] [right = 1 cm of 7] {$y_4$};
  \node(10) [right = 0.5 cm of 6] {};
  
  \node(9)[main node] [below = 1 cm of 10] {$z$};
  
  \node(11)[main node] [below = 2 cm of 5] {$u_1$};
  \node(12)[main node] [below = 2 cm of 6] {$u_2$};
  \node(13)[main node] [below = 2 cm of 7] {$u_3$};
  \node(14)[main node] [below = 2 cm of 8] {$u_4$};

  \path[every node/.style={font=\sffamily\small}]
	(1) edge (5)
	(2) edge (6)
	(3) edge (7)
	(4) edge (8)
	
	(5) edge (9)
	(6) edge (9)
	(7) edge (9)
	(8) edge (9)
	
	(9) edge (11)
	(9) edge (12)
	(9) edge (13)
	(9) edge (14)
	;
\end{tikzpicture}
\caption{The graph $G_4$.} \label{G4}
\end{figure}
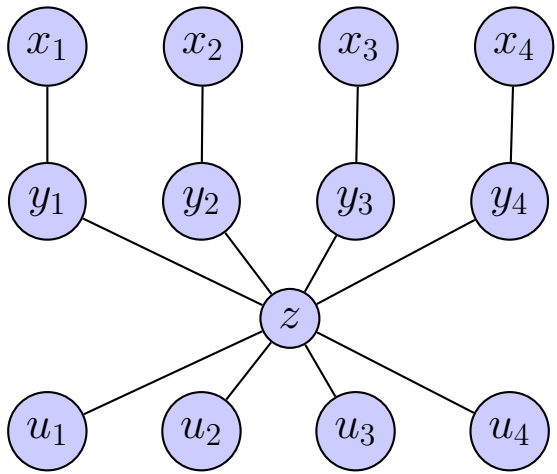

We start by showing that $G_n$ is a Freckle Graph.
The smallest maximal independent set has size $n+1$.
We want to show that $\ois(G)=n+1$, that is no algorithm can get an independent set of size more than $n+1$ in the worst case.
We consider an arbitrary algorithm, \alg, and
the situation where the adversary starts by presenting $n$ isolated vertices.
If \alg rejects all of these, the adversary can decide that it was $u_1, \ldots, u_n$.
In the remaining graph, it is not possible to accept more than $n+1$ vertices.
Otherwise, \alg accepts $i>0$ of the $n+1$ isolated vertices.
The adversary can decide that one was $z$ and that the remaining were $x_1, \ldots x_{i-1}$.
Since \alg accepted $z$, it can never accept  any of the vertices 
$y_1, \ldots, y_n$ or $u_1, \ldots, u_n$.
Thus, it can at most accept $n+1$ vertices.
This shows that $G_n$ is a Freckle Graph.

The algorithm, \AGI, is identical to \GI, except that
it rejects a vertex if it already has two neighbors when it is presented.
Consider any ordering of the vertices of $G$ where \GI and \AGI do not accept the same independent set.
There must exist a first vertex, $w$, which is accepted by one of the algorithms and rejected by the other.
By definition of the algorithms, it must be the case that $w$ is rejected by \AGI and accepted by \GI.
It must hold that $w$ has two neighbors, which have not been accepted by either algorithm.
This can only happen if $w=z$ and the two neighbors are $y_i$ and $y_j$ where $x_i$ and $x_j$ have already
been presented and accepted by both algorithms and no $u_k$ have been presented yet. In this case, $z$ is accepted
by \GI and rejected by \AGI. However, $u_1, \ldots, u_n$ are accepted by \AGI and rejected by \GI.
Since $n \geq 2$ and since both \GI and \AGI accept exactly one of
$x_i$ and $y_i$, $1\leq i\leq n$,
we get that on every ordering, $\phi$, where \GI and \AGI accept a different independent set, $\AGI(\phi(G)) > \GI(\phi(G))$.
Such an ordering always exists (the ordering $x_1,\ldots,x_n,y_1,\ldots,y_n,z,u_1,\ldots,u_n$ achieves this). \qed
\end{proof}

Competitive analysis, on-line competitive analysis, and relative worst order ratio do not identify \AGI as a better algorithm than \GI on the class of graphs \family defined
in the proof of Theorem~\ref{agi}. There are, however, other measures which do this. Bijective analysis and average analysis~\cite{Angelopoulos07} are such  measures. Let $I_n$ be the
set of all input sequences of length $3n+1$. Since we are considering the rather restricted graph class \family, $I_n$ denotes all orderings of the vertices in $G_n$
(since these are the only inputs of length $3n+1$). For an algorithm $A$ to be considered better than another algorithm $B$ for a maximization problem, it must hold for sufficiently large $n$
that there exists a bijection $f: \; I_n \rightarrow I_n$ such that the following holds:

\begin{align*}
\forall & I \in I_n \; A(I) \geq B(f(I)) \\
\exists & I \in I_n \; A(I) > B(f(I)) \\
\end{align*}

\begin{theorem}
\AGI is better than \GI on the class \family according to bijective analysis.
\end{theorem}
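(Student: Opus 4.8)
The plan is to take the identity map as the required bijection and to observe that the two inequalities demanded by bijective analysis are then exactly the two conclusions already established in Theorem~\ref{agi}. Because the class \family is so restricted, the set $I_n$ of length-$(3n+1)$ inputs is precisely the set of vertex orderings $\phi$ of $G_n$; under this identification each input $I \in I_n$ is some $\phi(G_n)$, with $\AGI(I) = \AGI(\phi(G_n))$ and $\GI(I) = \GI(\phi(G_n))$.

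First I would fix $f = \mathrm{id}_{I_n}$. The universal requirement $\forall I \in I_n:\ \AGI(I) \geq \GI(f(I))$ then reads $\AGI(\phi(G_n)) \geq \GI(\phi(G_n))$ for every ordering $\phi$, which is exactly the first line of Theorem~\ref{agi}. The existential requirement $\exists I \in I_n:\ \AGI(I) > \GI(f(I))$ is witnessed by the ordering $x_1,\dots,x_n,y_1,\dots,y_n,z,u_1,\dots,u_n$ exhibited in the proof of Theorem~\ref{agi}, on which \AGI strictly beats \GI. Since both statements hold for every $n \geq 2$, they hold for all sufficiently large $n$, so the identity bijection already meets the definition and \AGI is better than \GI.

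There is essentially no hard step here: all the combinatorial work, namely showing that $G_n$ is a Freckle Graph and that \AGI pointwise dominates \GI with at least one strict ordering, was carried out in Theorem~\ref{agi}. The only thing left to verify is the easy principle that pointwise domination together with one strictly dominated input suffices for the bijective-analysis ``better than'' relation, for which the identity map is the canonical witness. As a sanity check that this is a genuine separation rather than an artifact of a poorly chosen bijection, I would note that summing the pointwise inequalities gives $\sum_{I \in I_n} \AGI(I) > \sum_{I \in I_n} \GI(I)$; since any bijection merely permutes the \GI-values, no bijection can make \GI at least as good as \AGI, and the same inequality immediately shows that \AGI also beats \GI under average analysis. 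The point of the theorem is therefore conceptual rather than technical: bijective and average analysis detect the improvement of \AGI over \GI on \family that the worst-case measures of Section~\ref{measures} provably cannot.
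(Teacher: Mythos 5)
Your proof is correct and is essentially identical to the paper's: both take the bijection $f$ to be the identity and invoke the two inequalities of Theorem~\ref{agi} directly. Your additional remarks (the summed inequality, the implication for average analysis) are sound but not needed for the statement itself.
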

\begin{proof}
We let the bijection $f$ be the identity and the result follows from Theorem~\ref{agi}. \qed
\end{proof}
Average analysis is defined such that if one algorithm is better
than another according to bijective analysis, it is also better
according to average analysis. Thus, \AGI is better than \GI
on the class \family according to average analysis.

Note that \AGI is not an optimal algorithm for all Freckle Graphs.
The class of graphs, $K_{n,n}$, for $n\geq 2$,
 consisting of complete bipartite graphs with
$n$ vertices in each side of the partition, is a class where
\AGI can behave very poorly. Note that on these graphs, \GI
is optimal and always finds an independent set of size $n$,
which is optimal, so these graphs are Freckle Graphs, even
though they have no isolated vertices. If the first request
to \AGI is a vertex from one side of the partition and the
next two are from the other side of the partition, \AGI only
accepts one vertex, not $n$.

\section{Complexity of Determining the Online Independence Number, Vertex Cover Number, and Domination Number}
\label{hardness}
Given a graph, $G$, it is easy to check if it has an isolated vertex and apply Theorem~\ref{dsgod}.
However, Theorem~\ref{isgod} and Corollary~\ref{vcgod} might not be as 
easy to apply, because it is not obvious how one can check if
a graph is a Freckle Graph ($k+|\lille| \geq \ois(G')$). 
In some cases, this is easy.
 For example, any graph where at least half the vertices are 
isolated is a Freckle Graph.
We leave the hardness of recognizing Freckle Graphs as an open
problem, but
we show a hardness result for deciding if $\ois(G) \leq q$.

\begin{theorem} \label{ishard}
Given $q \in \mathbb{N}$ and a graph, $G$, deciding if $\ois(G) \leq q$ is \np-hard.
\end{theorem}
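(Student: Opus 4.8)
The plan is to prove \np-hardness by reducing from a known \np-hard problem. The natural candidate is the offline \ISoff problem: given a graph $H$ and an integer $t$, decide whether $H$ has an independent set of size at least $t$. The key intuition is that the online independence number $\ois(G)$ should be closely related to the \emph{minimum} inclusion-maximal independent set (the independent domination number), because an adversary can force any algorithm down to the size of the smallest maximal independent set by committing early-accepted isolated-looking vertices to be exactly such a set. So I would first try to show that for a suitably constructed graph, $\ois(G)$ equals (or is controlled by) a quantity whose computation encodes the independent-set question in $H$.

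First I would make precise the adversarial lower-bound mechanism already used in the proofs of Lemma~\ref{conservative} and Theorem~\ref{isgod}: the adversary presents a batch of mutually nonadjacent requests and, depending on how many the algorithm accepts, retroactively commits them to be a specific independent set of the target graph. This lets the adversary cap any algorithm at roughly the size of the smallest maximal independent set of the relevant induced subgraph. Conversely, an algorithm can guarantee acceptance of any fixed maximal independent set if it is revealed to it. The technical heart is therefore to relate $\ois(G)$ to the independent domination number $|s(G')|$ and to the ordinary online/offline independence numbers, so that bounding $\ois(G)$ by $q$ becomes equivalent to an \np-complete threshold question on $H$.

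The cleanest route I would pursue is a gadget construction in which $\ois(G) \le q$ holds if and only if $H$ has (or fails to have) an independent set of a prescribed size. I would build $G$ from $H$ by attaching auxiliary vertices (for instance pendant or twin vertices) so that the smallest maximal independent set and the largest independent set are forced apart by exactly the quantity encoding the answer in $H$; isolated vertices can be added freely to tune the arithmetic without changing the combinatorics, as the paper repeatedly exploits. The main obstacle, and the step I expect to require the most care, is establishing both directions of the equivalence for $\ois(G)$ itself rather than for the easier offline parameters: I must argue that no \emph{online} algorithm can beat the adversary's cap (the lower bound on how small $\ois(G)$ can be forced to be), and that some online algorithm achieves the claimed value (the matching upper bound). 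Pinning down $\ois(G)$ exactly is genuinely harder than reasoning about $|s(G')|$ or the offline independence number, since it quantifies over all algorithms and all orderings; I would therefore spend most of the effort proving that the adversary strategy sketched above is optimal and that a strategy-stealing/commitment argument yields the precise value $q$ as a function of the independent-set threshold in $H$.
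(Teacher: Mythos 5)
Your structural intuition is correct: padding a graph with enough isolated vertices makes it a Freckle Graph, and then $\ois$ collapses to the number of isolated vertices plus the independent domination number $|s(G')|$. Moreover, the step you budget most of your effort for---proving that the adversary's commitment strategy is optimal against \emph{every} online algorithm---is already done: Theorem~\ref{isgod} and Corollary~\ref{freckle;greedy} give $\ois(\tilde{G})=\GI(\tilde{G})$ for any Freckle Graph $\tilde{G}$, so no new adversary argument is needed. The genuine gap is your choice of source problem and the gadget it then requires. You reduce from \ISoff (maximum independent set), so you must construct a graph in which a \emph{maximum} independent set threshold in $H$ is encoded by the \emph{independent domination number} of the padded graph; you never exhibit this gadget, and it is not a routine step. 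Adding isolated vertices only shifts both quantities by the same additive constant, so it cannot bridge the two parameters, and your suggested pendant/twin constructions fail outright: for instance, attaching a pendant to every vertex of $H$ forces every maximal independent set to contain exactly one vertex of each pendant pair, so every maximal independent set has size exactly $|V(H)|$ and all information about $H$'s independence structure is destroyed. Filling this hole would amount to re-proving, from scratch, the NP-hardness of the minimum maximal independent set problem---a known but nontrivial result whose standard reductions go through SAT-style gadgets, not simple vertex attachments.

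The paper avoids all of this by reducing from minimum maximal independent set directly, which is NP-complete by Garey and Johnson: given $(G,L)$ with $G=(V,E)$, form $\tilde{G}$ by adding $|V|$ isolated vertices and set $\tilde{L}=L+|V|$. Since $|V|\geq \ois(G)$, the graph $\tilde{G}$ is Freckle, so Corollary~\ref{freckle;greedy} yields $\ois(\tilde{G})=\GI(\tilde{G})=|V|+|s(G)|$, and hence $\ois(\tilde{G})\leq \tilde{L}$ if and only if $|s(G)|\leq L$. In other words, once the source problem is chosen to match the parameter that actually governs $\ois$ on padded graphs, the entire reduction is the trivial one (add isolated vertices), and the proof is a few lines. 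If you insist on starting from \ISoff, you must first embed a reduction from independent set to independent domination, which is precisely the content your proposal leaves unspecified.
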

\begin{proof}
Note that it is \np-complete to determine
if the minimum maximal independent set of a graph, $G=(V,E)$, 
has size at most $L$, 
for an integer $L$~\cite{GJ79}. To reduce from this problem,
we create $\tilde{G}=(\tilde{V},E)$ which is the same as $G$, but has $|V|$ extra
isolated vertices, and a bound $\tilde{L}=L+|V|$. $\tilde{G}$ is a Freckle Graph,
since $|V| \geq \ois(G)$. By Corollary~\ref{freckle;greedy}, 
$\GI(\tilde{G})=\ois(\tilde{G})$. Since $\GI(\tilde{G}) = |s(G)|+|V|$,
the original graph, $G$, has a minimum maximal independent
set of size $L$, if and only if $\tilde{G}$ has online independence
number at most $\tilde{L}$.
 \qed
\end{proof}

The hardness of computing the online independence number implies
the hardness of computing the online vertex cover number.

\begin{corollary}
Given $q \in \mathbb{N}$ and a graph, $G$, deciding if $\ovc(G) \geq q$ is \np-hard.
\end{corollary}
\begin{proof}
This follows from Observation~\ref{isvcsum} and Theorem~\ref{ishard}. \qed
\end{proof}

\begin{theorem}
Given $q \in \mathbb{N}$ and a graph, $G$, deciding if $\ods(G) \geq q$ is \np-hard.
\end{theorem}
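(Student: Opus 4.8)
The plan is to reduce from a known \np-hard problem about dominating sets, mirroring the structure of the proof of Theorem~\ref{ishard}. The key leverage is the Corollary following Theorem~\ref{dsgod}, which states that $\GD(G) = \ods(G)$ for any graph $G$ with at least one isolated vertex. So if I can arrange, via a reduction, that the target graph has an isolated vertex and that $\GD$ on that graph encodes a hard quantity, I am done. First I would identify the right source problem. The natural candidate is the problem of deciding whether a graph has an \emph{upper domination number} (the size of a largest minimal dominating set) at least some bound, or equivalently some maximization variant of dominating set, since $\GD(G)$ is defined as a \emph{maximum} over orderings and we are asked about a lower bound $\ods(G) \geq q$. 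Recall from the proof of Theorem~\ref{dsgod} that $\GD$ always produces an independent set, and on a graph with $k$ isolated vertices it accepts exactly $k + |b(G')|$ in the worst (here, best-for-the-adversary, i.e.\ maximizing) ordering, where $b(G')$ is a maximum independent set in the non-isolated part.

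The main technical point to pin down is exactly what $\GD(G)$ equals as a graph parameter, so I can match it to a known \np-hard quantity. Since $\GD$ greedily accepts any vertex with no already-accepted neighbor, the sets it can produce over all orderings are precisely the \emph{inclusion-maximal independent sets} of $G$, and $\GD(G) = \max_\phi \GD(\phi(G))$ picks out the largest such set. A maximal independent set is exactly a minimal dominating set that is also independent; the largest independent dominating set is not the standard upper domination number, so I must be careful here. The cleanest route, I expect, is: note that the largest inclusion-maximal independent set equals the ordinary independence number $\alpha(G)$ (since any maximum independent set is inclusion-maximal), so in fact $\GD(G) = k + \alpha(G')= \alpha(G)$ whenever $G$ has $k \geq 0$ isolated vertices. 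Then the Corollary gives $\ods(G) = \alpha(G)$ for any $G$ with at least one isolated vertex, and deciding $\alpha(G) \geq q$ is the classic \np-complete \textsc{Independent Set} problem.

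Given this, the reduction is short. Starting from an instance $(G, q)$ of \textsc{Independent Set}, I would form $\hat{G}$ by adding a single isolated vertex to $G$, so that $\hat{G}$ has an isolated vertex and $\alpha(\hat{G}) = \alpha(G) + 1$. By the reasoning above, $\ods(\hat{G}) = \alpha(\hat{G}) = \alpha(G)+1$, so $\ods(\hat{G}) \geq q+1$ if and only if $\alpha(G) \geq q$. This is a polynomial-time many-one reduction, establishing \np-hardness.

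The step I expect to be the main obstacle is verifying the identity $\GD(G) = \alpha(G)$ rigorously, i.e.\ confirming that the maximum-size set $\GD$ can produce over all vertex orderings is genuinely the independence number and not merely some smaller maximal independent set. The subtlety is that $\GD$'s output on a given ordering is forced to be \emph{inclusion-maximal} (it never leaves an addable vertex rejected), but I must check that a \emph{maximum} independent set can always be realized by some ordering — namely, present the vertices of a maximum independent set first, after which every remaining vertex has an accepted neighbor and is rejected. That argument is routine but is the crux on which the whole reduction rests, so I would state it explicitly rather than assume it.
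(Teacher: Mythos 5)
Your proposal is correct and follows essentially the same route as the paper: both reduce from offline \textsc{Independent Set} by adding a single isolated vertex and shifting the bound by one, using Theorem~\ref{dsgod} (equivalently, the identity $\ods = \GD$ on graphs with an isolated vertex). The only difference is presentational --- you first establish the clean identity $\GD(\hat{G}) = \alpha(\hat{G})$ and then conclude, whereas the paper argues the two implications of the equivalence directly; your worry about the inclusion-maximal versus maximum independent set distinction is resolved exactly as you suggest (present a maximum independent set first), which is also the argument in the paper's proof of Theorem~\ref{dsgod}.
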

\begin{proof}
We make a reduction from \ISoff. In \ISoff, a graph, $G$ and an $L \in \mathbb{N}$ is given.
It is a yes-instance if and only if there exists an independent set of size at least $L$.
We reduce instances of \ISoff, ($G$,$L$), to instances of \DS, ($\tilde{G}$,$\tilde{L}$), such that
there exists an independent set in $G$ of size at least $L$ if and only if $\ods(\tilde{G}) \geq \tilde{L}$.
The reduction is very simple. We let $\tilde{G}$ be the graph which consists of $G$ with one additional isolated vertex.
We set $\tilde{L}=L+1$.
Assume first that any independent set in $G$ has size at most $L-1$. This means that any independent set in $\tilde{G}$ has
size at most $L$. Since \GD produces an independent set, it will accept at most $L<\tilde{L}$ vertices.
Assume now that there is an independent set of size at least $L$ in $G$. Then, there exists an independent set of size at least $L+1$
in $\tilde{G}$. If these vertices are presented first, \GD will accept them. From Theorem~\ref{dsgod}, we get that no algorithm for
\DS can do better (since $\tilde{G}$ has an isolated vertex), which means that $\ods(\tilde{G}) \geq \tilde{L}$. \qed
\end{proof}

\begin{theorem} \label{ispspace}
Given $q \in \mathbb{N}$ and a graph, $G$, the problem of deciding if $\ois(G) \leq q$ is in \pspace.
\end{theorem}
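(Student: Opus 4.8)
The plan is to show that $\ois(G) \le q$ can be decided by an alternating Turing machine using polynomial space, which suffices since $\pspace = \textsc{APSPACE}$ (indeed $= \textsc{AP}$). The statement $\ois(G) \le q$ unfolds, by definition of the online independence number, into: \emph{for every} algorithm \alg there exists an ordering $\phi$ with $\alg(\phi(G)) \le q$. The universal quantifier over all algorithms is the troublesome part, since there are infinitely many algorithms. The key observation that makes this tractable is that an online algorithm for \IS, when the graph $G$ is fixed and known, can be modeled as a \emph{game} between the algorithm and the adversary: the adversary reveals vertices one at a time (choosing which vertex of $G$ to present, subject to consistency with the revealed subgraph), and the algorithm responds accept/reject based only on the revealed partial graph. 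Because an optimal algorithm need only depend on the current revealed subgraph (the ``configuration''), and each configuration can be described in polynomial space, the entire game tree can be explored in polynomial space.

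Concretely, I would set up a recursive alternating procedure that plays out this game. A game state records which vertices of $G$ have been requested so far, the isomorphism-type information the algorithm sees (the revealed subgraph together with the accept/reject decisions made), and the running count of accepted vertices. At an adversary move (existential, since we are checking that the adversary \emph{can} force a low score against every algorithm), the machine guesses the next vertex to reveal; at an algorithm move (universal, since we must beat \emph{every} algorithm's choice) the machine branches over both accept and reject, but only over accepts that keep the accepted set independent. The subtlety already flagged in the paper — that the revealed subgraph may be isomorphic to several subgraphs of the known $G$, so the label presented need not be the ``true'' label — is handled by having the adversary commit only to an isomorphism type at each step and defer the final identification, exactly as in the adversary arguments of Lemma~\ref{conservative} and Theorem~\ref{isgod}. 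The recursion has depth $|G|$, and each level stores only polynomially much information, so the total space is polynomial.

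The main obstacle is making precise what a ``configuration'' of the algorithm is and arguing that restricting attention to configuration-based (memoryless-in-the-revealed-graph) algorithms does not change $\ois(G)$. One must argue that for the purpose of the worst-case value $\ois(G)$, it suffices to consider algorithms whose decision on each request is a function of the revealed labeled subgraph together with the sequence of prior decisions; any algorithm can be simulated by one of this form without decreasing its guaranteed score, because the revealed information is all that distinguishes one input from another. Once this is established, the set of relevant configurations is finite and each is polynomial-size, so the alternating game is well-defined. I would verify that the alternation structure correctly captures $\min_\phi$ over adversary choices (existential) and the universal quantifier over algorithm responses, and that a play is scored by comparing the final accepted-set size against $q$, with the adversary winning iff the score is at most $q$. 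Since $\pspace$ is closed under this kind of polynomial-depth alternation, concluding membership in \pspace is then routine.
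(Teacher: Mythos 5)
Your proposal is correct and takes essentially the same route as the paper: both cast the question as a polynomial-length two-player game in which the adversary reveals abstract vertices subject to the revealed graph remaining an induced subgraph of $G$ (a condition checkable in polynomial space) and the algorithm accepts or rejects, the only cosmetic difference being that you invoke the characterization of \pspace via polynomial-depth alternation while the paper evaluates the same game tree directly by depth-first search. Your worry about restricting to ``configuration-based'' algorithms is not a real obstacle — a deterministic online algorithm is, by definition, a function of the revealed labeled subgraph and its own prior decisions, so game-tree strategies for the universal player correspond exactly to online algorithms, and your resolution of the label-identification subtlety (the adversary commits only to an isomorphism type) is precisely the paper's formulation of the adversary's moves.
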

\begin{proof}
Let $q \in \mathbb{N}$ and a graph, $G=(V,E)$, be given. We sketch an algorithm that uses only polynomial space which decides if $\ois(G) \leq q$.
We view the problem as a game between the adversary and the algorithm where the algorithm wins if it gets an independent set of size at least $q+1$.
A move for the adversary is revealing a vertex along with edges to a subset of the previous vertices such that the resulting graph
is an induced subgraph of $G$.
 These are possible to enumerate since induced subgraph can be solved in polynomial space. A move by the algorithm is accepting or rejecting that vertex.
We make two observations: The game has only polynomial length (each game has length $2|V|$),
 and it is always possible in polynomial space to enumerate the possible moves from a game state.
Thus, an algorithm can traverse the game tree using depth first search and recursively compute for each game state if the adversary or the algorithm has a winning strategy. \qed
\end{proof}
Similar proofs can be used to show that the problems of deciding if $\ovc(G) \geq q$ and $\ods(G) \geq q$ are in \pspace as well.
It remains open whether these problems are \np-complete, \pspace-complete, or neither.
In \cite{Kudahl15} it was shown that determining the online chromatic number is \pspace-complete if the graph is pre-colored
and extended in \cite{BV16} to hold even if the graph is not pre-colored.

\section{Concluding Remarks}
A strange difference between online and offline algorithms
is observed: Adding isolated vertices to a graph can change
an algorithm from not being optimal to being optimal (according
to many measures). This holds for \IS, \VC, and \DS. It is also
shown that adding isolated elements can make the natural greedy
algorithm optimal for  \M and \MOS 
(which includes \IM as a special case), but
not for all problems in the class AOC. 

It is even more surprising
that this difference occurs for vertex
cover than for independent set, since in the offline case,
adding isolated vertices to a graph  can improve the approximation
ratio in the case of the independent set problem. It is hard to
see how adding isolated vertices to a graph could in any way
help an offline algorithm for vertex cover.

We have shown that for Freckle Graphs, the greedy algorithm is optimal for \IS,
but what about the converse?
If a graph is not Freckle, is it the case that the greedy algorithm is not optimal?
Let $G$ be a graph, that is not a Freckle Graph.
By definition, we have that $\ois(G') > |\lille|+k=\GI(G)$.
To show that the greedy algorithm is not optimal, we
would have to show that $\ois(G) > \GI(G)$.
To show this, it would suffice to show that $\ois(G) \geq \ois(G')$.
That is, the online independence number can never decrease when isolated vertices are added to a graph.
We leave this as an open question.

Note that $\GI=\GD$. This means that for Freckle Graphs with at least one isolated vertex, $\GI$ is an algorithm
which solves both online independent set (a maximization problem) and
online dominating set (a minimization problem) online optimally. This is
quite unusual, since the independent sets and dominating sets it will find
in the worst case can be quite different for the same graph.


As mentioned earlier, the NP-hardness results presented here do not answer the question as to how hard it
is to recognize Freckle Graphs. This is left as an open problem. 

We have shown it to be \np-hard to decide if $\ois(G) \leq q$, $\ovc(G) \geq q$, and $\ods(G) \geq q$, but there is nothing to suggest that these problems are contained in \np.
They are in \pspace, but it is left as an open problem if they are \np-complete, \pspace-complete or somewhere in between.

\section*{Acknowledgments}
The authors would like to thank Lene Monrad Favrholdt for interesting and 
helpful discussions. This research was supported by grants from the 
Villum Foundation (VKR023219), and the Danish Council for Independent Research, Natural Sciences (DFF–1323-00247). The second author was also supported
by a travel stipend from the Stibo-Foundation.

\bibliographystyle{splncs03}
\bibliography{refs}

\end{document}